\documentclass[11pt]{article}

\usepackage{fullpage}
\usepackage[utf8]{inputenc}
\usepackage{xcolor}
\usepackage{amsmath}
\usepackage{amssymb}
\usepackage{amsthm}
\usepackage{booktabs}
\usepackage[utf8]{inputenc}
\usepackage{csquotes}
\usepackage{bbm}
\usepackage[ruled,vlined]{algorithm2e}
\SetAlFnt{\small}
\usepackage{hyperref}
\usepackage{tikz}
\usetikzlibrary{calc, decorations.pathreplacing, calligraphy}
\usepackage{mathdots}
\usepackage{enumitem}
\allowdisplaybreaks


\usepackage{thmtools, thm-restate}
\declaretheorem{theorem}

\declaretheorem[sibling=theorem]{claim}

\newcommand{\mc}{\mathcal}
\newcommand{\A}{\mc A}
\newcommand{\B}{\mc B}
\newcommand{\I}{\mc I}
\newcommand{\T}{\mc T}
\newcommand{\U}{\mc U}

\newcommand{\OPT}{\ensuremath{\mathrm{OPT}}}
\newcommand{\ALG}{\ensuremath{\mathrm{ALG}}}

\newcommand{\fettered}{tight}

\title{Caching with Reserves}
\author{Sharat Ibrahimpur \\ University of Waterloo \\ sharat.ibrahimpur@uwaterloo.ca \and Manish Purohit \\ Google Research \\ mpurohit@google.com \and Zoya Svitkina \\ Google Research \\ zoya@google.com \and Erik Vee \\ Google Research \\ erikvee@google.com \and Joshua R. Wang \\ Google Research \\ joshuawang@google.com }

\date

\begin{document}

\maketitle

\begin{abstract}
  Caching is a crucial component of many computer systems, so naturally it is a well-studied topic in algorithm design. Much of traditional caching research studies cache management for a single-user or single-processor environment. In this paper, we propose two related generalizations of the classical caching problem that capture issues that arise in a multi-user or multi-processor environment.
  In the \emph{caching with reserves} problem, a caching algorithm is required to maintain at least $k_i$ pages belonging to user $i$ in the cache at any time, for some given reserve capacities $k_i$. In the \emph{public-private caching} problem, the cache of total size $k$ is partitioned into subcaches, a private cache of size $k_i$ for each user $i$ and a shared public cache usable by any user. In both of these models, as in the classical caching framework, the objective of the algorithm is to dynamically maintain the cache so as to minimize the total number of cache misses. 
    
  We show that \emph{caching with reserves} and \emph{public-private caching} models are equivalent up to constant factors, and thus focus on the former. Unlike classical caching, both of these models turn out to be NP-hard even in the offline setting, where the page sequence is known in advance. For the offline setting, we design a 2-approximation algorithm, whose analysis carefully keeps track of a potential function to bound the cost. In the online setting, we first design an $O(\ln k)$-competitive fractional algorithm using the primal-dual framework, and then show how to convert it online to a randomized integral algorithm with the same guarantee.
\end{abstract}

\pagenumbering{arabic}

\section{Introduction}
\label{sec:intro}
Caching is one of the most well-studied problems in online computation and also one of the most crucial components of many computer systems. In the classical caching (also referred to as paging) problem, page requests arrive online and an algorithm must maintain a small set of pages to hold in a cache so as to minimize the number of requests that are not served from the cache. Caching algorithms have been widely studied through the lens of competitive analysis and tight results are known~\cite{achlioptas2000competitive,fiat1991competitive,mcgeoch1991strongly}. Tight algorithms are also known for many generalizations such as weighted paging~\cite{bansal2010towards,bansal2012primal}, generalized caching~\cite{adamaszek2012log,BansalBN12} and paging with rejection penalties~\cite{epstein2015online}. Due to its practical importance, a large number of heuristic algorithms have been proposed such as Least Recently Used (LRU), Least Frequently Used (LFU), CAR~\cite{bansal2004car}, ARC~\cite{megiddo2003arc}, and many others.  Although they do not provide the best worst-case performance, they attempt to maximize the hit rate of the cache on practical instances. However, such traditional caching policies (both theoretical and practical) attempt to optimize the global efficiency of the system and are not necessarily suitable for cache management in a multi-user or multi-processor environment. In many of today’s cloud computing services, caches are shared among all the users utilizing the service and optimizing only for global efficiency can lead to highly undesirable allocation for some users. 
For example, a user who only accesses pages at long intervals may reap no benefit from the cache at all. In this paper, we propose two generalizations of the classical caching problem that are suited for caching in a shared multi-processor environment.

In a multi-user setting, a naive way to guarantee that all users benefit from the cache is to partition the cache among them and effectively maintain separate caches for each user. However, such a system can be extremely inefficient and lead to low overall throughput as the cache can remain underutilized. Instead, a number of recent systems~\cite{wu2022NyxCache,kunjir2017robus,pu2016fairride,yu2019lacs} aim to maximize the global efficiency of the cache while attempting to provide (approximate) \emph{isolation guarantees} to each user, i.e., the cache hit rate for each user is at least as much as what it would be if the user was allocated its own isolated cache (of proportionally smaller size). 
We model the multi-user scenario as the \emph{caching with reserves} problem wherein a caching algorithm is required to maintain at least $k_i$ pages belonging to user $i$ in the cache at any time for some input reserve capacities $k_i$. As in the classical caching framework, the objective of the algorithm is to dynamically maintain the cache so as to minimize the total number of cache misses. The reserve capacities for users provide an implicit isolation guarantee since $k_i$ cache slots are reserved for pages of user $i$.
We remark that when the reserve capacities are all zero, then the problem reduces to classical unweighted caching.

A similar issue arises in the multi-processor setting where we have different ``levels'' of caches. Lower-level caches tend to be smaller and dedicated to a particular processor, while higher-level caches can be used by multiple processors and are larger in size. Consider a system with $m$ separate processors, each of which has its own independent cache. In addition, there is a separate public cache shared by all the processors. We model such a setting as the \emph{public-private caching} problem where a cache of total size $k$ is partitioned into $(m+1)$ subcaches, one private cache for each user and a shared public cache. In contrast with classical caching, in this case cache slots themselves have identities and a page requested by user $i$ cannot be placed in a cache slot that belongs to the private cache of some other user $j$.

\subsection{Our Contributions and Techniques}
We propose and study the \emph{caching with reserves} and \emph{public-private caching} problems. We show that the two problems are equivalent up to constant factors (Section \ref{apx:equiv}).

\begin{restatable}{proposition}{equivalentprop}\label{prop:equivalent}
  If $\A$ is a $c$-competitive online algorithm for \emph{caching with reserves}, then there exists an online algorithm $\A'$ that is $2c$-competitive for \emph{public-private caching}. Similarly, if $\B$ is a $c$-competitive online algorithm for \emph{public-private caching}, then there exists an online algorithm $\B'$ that is $2c$-competitive for \emph{caching with reserves}.
\end{restatable}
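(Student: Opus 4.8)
The plan is to prove both implications by an online simulation that keeps the request sequence unchanged and identifies the two parameter sets: a public--private instance with private capacities $k_1,\dots,k_m$ and public capacity $k_0 := k-\sum_i k_i$ is matched with the caching-with-reserves instance having the same total capacity $k$ and reserves $k_1,\dots,k_m$. Everything hinges on two ways of translating a configuration of one model into a configuration of the other, performed online step by step; write $U_i$ for the set of pages of user $i$, and recall that on a hit an algorithm may keep its cache unchanged and that evictions are not charged, so only misses matter.

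The first translation is immediate: a caching-with-reserves configuration is a set $S$ with $|S|\le k$ and $|S\cap U_i|\ge k_i$ for all $i$; assign $k_i$ of the pages of $S\cap U_i$ to the private cache of user $i$ and dump the remaining $|S|-\sum_i k_i\le k-\sum_i k_i=k_0$ pages into the public cache, obtaining a public--private configuration with the very same page set. Hence, given a $c$-competitive caching-with-reserves algorithm $\A$, the algorithm $\A'$ feeds the request sequence to $\A$ and reports the translated configuration; since $\A'$ holds the same pages as $\A$ at every step, every hit of $\A$ is a hit of $\A'$, so $\cost(\A')\le\cost(\A)\le c\cdot\OPT_{\mathrm{cwr}}$. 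The second translation turns a public--private configuration --- private contents $P_i$ with $|P_i|\le k_i$, public contents $Q$ with $|Q|\le k_0$ --- into the ``topped-up'' page set $S=\bigcup_i P_i\cup Q$ together with, for each $i$, an extra $\bigl(k_i-|P_i|-|Q\cap U_i|\bigr)^{+}$ filler pages of $U_i$, so that $|S\cap U_i|=\max\bigl(k_i,\,|P_i|+|Q\cap U_i|\bigr)\ge k_i$; a one-line count, $|S|=\sum_i k_i+\sum_i\bigl(|P_i|+|Q\cap U_i|-k_i\bigr)^{+}\le\sum_i k_i+\sum_i|Q\cap U_i|=\sum_i k_i+|Q|\le k$, shows this respects the total capacity --- each under-filled private slot of user $i$ is compensated by an idle public slot. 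Applying this translation to an optimal public--private solution yields a caching-with-reserves solution of no larger miss cost, so $\OPT_{\mathrm{cwr}}\le\OPT_{\mathrm{pp}}$ and the first implication follows. The second implication is symmetric: from a $c$-competitive public--private algorithm $\B$, let $\B'$ maintain online the topped-up set $S$ built from $\B$'s caches; every hit of $\B$ is a hit of $\B'$, and whenever $\B$ evicts a page, recycling it (or another page of the same user that $\B'$ already holds) as a filler keeps $S$ valid and creates no miss that $\B$ did not already pay, giving $\cost(\B')\le\cost(\B)$, while the first translation applied to caching-with-reserves $\OPT$ shows $\OPT_{\mathrm{pp}}\le\OPT_{\mathrm{cwr}}$.

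The step that needs care --- and the one I expect to be the main obstacle --- is checking that the topped-up configuration can actually be \emph{maintained online}: when a user's reserve slack increases by one (because $\B$ just evicted one of its pages), exactly one suitable page has become free to install as a fresh filler, and the global bound $\sum_i(|P_i|+|Q\cap U_i|-k_i)^{+}\le|Q|$ must be re-verified to persist after each single-page move so that $|S|$ never exceeds $k$. The remaining looseness between the bound $c$ obtained above and the stated factor $2c$ is spent on boundary effects --- principally the warm-up phase, before $k_i$ distinct pages of user $i$ have appeared in the sequence, during which the reserve cannot yet be literally filled.
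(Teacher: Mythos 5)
Your overall architecture --- two online configuration translations, one applied to the given algorithm and the other applied to the optimum of the other model --- is the same as the paper's. But you have the location of the difficulty exactly backwards, and the direction you call ``immediate'' contains the real gap. In the public--private model a page does not merely belong to the cache, it occupies a specific slot, and only pages of agent $i$ may occupy agent $i$'s private slots; the cost model charges for every eviction, including one performed merely to move a page from a public slot to a private slot. Your first translation ``assigns $k_i$ of the pages of $S\cap U_i$ to the private cache of user $i$'' afresh at each step, but this assignment must be maintained \emph{consistently} from one step to the next. Concretely: suppose the caching-with-reserves algorithm $\A$ misses on a page $p$ of agent $j$ and evicts a page $q$ of agent $i\neq j$, where your previous step's assignment had placed $q$ in agent $i$'s private cache. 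The vacated slot cannot receive $p$. To restore a valid public--private configuration with the same page set, one must additionally move some other page of agent $i$ from the public cache into $i$'s private cache (such a page exists because $\A$'s eviction of $q$ left agent $i$ with at least $k_i$ cached pages), and that move is a second charged eviction. This is precisely the source of the factor $2$ in the proposition; your claim that the translation is cost-preserving, and your closing remark attributing the factor $2$ to ``warm-up boundary effects,'' are both incorrect. With the factor $2$ restored in this translation, your chain of inequalities does recover the stated bounds.

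The reverse direction (public--private to caching-with-reserves) is the genuinely easy one, and your treatment of it is more complicated than necessary while still leaving its key step open: you yourself flag as ``the main obstacle'' the online maintenance of the filler pages, and you do not resolve it (installing a fresh filler is a non-demand fetch whose legitimacy and later eviction cost you never account for). The paper avoids fillers entirely by observing that one may assume the public--private caches are always full (initialized with dummy pages and updated by evict-one/insert-one), so that the identical page set automatically contains at least $k_i$ pages of each agent $i$ and the reserve constraint holds with no bookkeeping; a short case analysis then shows that mirroring each eviction never dips below a reserve. I recommend adopting that simplification and reworking the first translation around the slot-consistency issue described above.
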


Our next set of results considers the \emph{offline} scenario where the entire request sequence is known in advance. Recall that in the classical setting, there is a simple exact solution (Belady's algorithm, which evicts the page that is requested farthest in the future \cite{belady}). In our more complex setting, we show an NP-hardness result (Appendix \ref{apx:np-hardness}). 

The reduction is from 3-SAT. A naive strategy to reduce 3-SAT to our problem is to try to transform boolean variable assignments (e.g. $x_1 = T, x_2 = F$) into the contents of cache at a particular point in time (e.g., agent 1 has its ``true'' page in cache and agent 2 has its ``false'' page in cache). This runs into a stumbling block: to check that a clause is satisfied, one needs to request the relevant pages. Since we only expect one of them to actually be in cache, this provides the opportunity for a cheating solution to swap the contents of cache. Our construction sidesteps this issue by embracing page swapping and instead demanding that a variable assignment be encoded as a particular sequence of page swaps.

\begin{restatable}{theorem}{lowerboundthm}\label{thm:lower-bound}
  Both the offline caching with reserves problem and the offline public-private caching problem are strongly NP-hard.
\end{restatable}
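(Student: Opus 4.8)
The plan is to reduce 3-SAT to the offline caching with reserves problem; since 3-SAT is NP-hard with no large numbers involved, a polynomial-time reduction immediately yields \emph{strong} NP-hardness. I would prove hardness for caching with reserves and then obtain the public-private case either by invoking Proposition~\ref{prop:equivalent} or, more cleanly, by running the same construction with a private cache of size $k_i$ in place of each reserve $k_i$, since ``keeping more than $k_i$ pages of user $i$'' consumes shared/public capacity in exactly the same way in both models.

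Given a 3-CNF formula $\phi$ with variables $x_1,\dots,x_n$ and clauses $C_1,\dots,C_m$, I would build an instance with one \emph{variable agent} per $x_i$, each with reserve capacity $k_i=1$, choose the total cache size $k$ so that only a tiny amount of floating capacity (a single slot, say) is available beyond $\sum_i k_i$, and give agent $i$ two designated pages $T_i$ and $F_i$ (its ``true'' and ``false'' pages) together with enough auxiliary pages to pin down the rest of the cache contents. The request sequence consists of two kinds of gadgets, interleaved: a \emph{variable gadget} for each $x_i$, a carefully tuned alternating pattern of requests to $T_i$, $F_i$ and auxiliary pages that forces the algorithm to commit, essentially once and for all, to keeping one of $T_i,F_i$ resident while paying a fixed toll of misses; and a \emph{clause gadget} for each $C_j$, which simply requests the three literal pages of $C_j$. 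The gadgets are arranged so that the cheap way to serve a clause gadget is available precisely when the committed assignment satisfies that clause. Following the hint in the introduction, the assignment is \emph{not} read off the static cache at clause-check time (which a cheating solution could corrupt by swapping $T_i\leftrightarrow F_i$ on the fly); rather it is encoded by the swap pattern the algorithm is forced to execute inside the variable gadgets, which are designed so that any deviation from a fixed ``honest'' swap pattern already overshoots the target budget.

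I would set the budget $B$ equal to the total toll of an honest schedule that fixes some assignment, serves each variable gadget by its honest pattern for the chosen value, and serves each clause gadget cheaply. Then the two directions are: \emph{(soundness)} a satisfying assignment of $\phi$ yields a schedule of cost exactly $B$, since each clause gadget is served cheaply because one of its literal pages is resident; and \emph{(completeness)} any schedule of cost at most $B$ must, within each variable gadget, follow the honest pattern for one of the two values (else it incurs an extra miss and exceeds $B$), hence induces a consistent truth assignment, and to serve every clause gadget within budget this assignment must satisfy every clause. The budget is deliberately so tight that a single extra miss anywhere is fatal, which is what simultaneously forces honest-pattern behavior and clause satisfaction.

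The main obstacle is the completeness direction --- ruling out the ``cheating'' schedules the introduction warns about: temporarily parking a second page of some agent in the floating slot, swapping an agent's resident page between gadgets, or prefetching a clause's literal pages just before its gadget. The crux is a local accounting lemma showing every such maneuver pays for itself: each illicit swap or prefetch costs at least as many additional misses as it could save later, so no schedule beats cost $B$ unless $\phi$ is satisfiable. Getting the gadgets right --- the exact cache size, the reserve values, the multiplicities and placement of auxiliary pages, and the interleaving order --- so that this accounting closes cleanly is the delicate part; once it does, strong NP-hardness follows for caching with reserves, and the mirrored construction gives it for public-private caching.
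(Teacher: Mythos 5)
Your proposal is a plan rather than a proof: it correctly identifies the high-level strategy the paper itself uses (reduce from 3-SAT, encode a variable's value in the swap pattern the schedule executes inside variable gadgets rather than in a static cache snapshot, and detect clause satisfaction via pressure on the small amount of non-reserved cache), but it explicitly defers the entire content of the argument --- ``getting the gadgets right \dots\ is the delicate part; once it does, strong NP-hardness follows.'' That delicate part is the theorem. Nothing in the sketch specifies the gadgets, the cache sizes, or the accounting lemma, so there is no proof to check for the completeness direction.

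Moreover, the sketch as stated has a concrete failure mode that the paper must work hard to avoid. You say each variable agent ``commits, essentially once and for all'' to keeping $T_i$ or $F_i$ resident, but with reserve $1$ and any spare capacity at all, a schedule can switch agent $i$ from $T_i$ to $F_i$ partway through the request sequence, serving early clause gadgets as if $x_i$ were true and later ones as if it were false; a local ``every illicit swap pays for itself'' lemma does not obviously rule this out, because a single mid-sequence switch can satisfy many clauses. The paper's construction handles exactly this by (i) reducing from the \emph{balanced} variant of 3-SAT (exactly half the variables true), (ii) giving each variable agent $3\deg(i)+4$ pages partitioned by residue mod $3$ so that a value corresponds to an entire interleaved subsequence of caching opportunities rather than one resident page, (iii) inserting heavily repeated round-robin ``public-cache-occupying'' gadgets with multiplicity $C' > \sum_i(3\deg(i)+4)$ so that stealing even one unit of shared cache during them costs more than all other caching opportunities combined, and (iv) bracketing the sequence with two endpoint variable gadgets whose public-cache budget forces at most $n/2$ agents to choose the ``true'' subsequence and at most $n/2$ the ``false'' one --- which is what finally forbids mixing a true prefix with a false suffix. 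Also note that your target budget cannot be ``one extra miss is fatal'': the paper's budget tolerates slack everywhere except in the repeated gadgets, where the penalty for a structural violation is $C'-1$ misses. Without these (or equivalent) devices, the completeness direction of your reduction does not go through.
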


Due to the equivalence of the two models, we focus on \emph{caching with reserves} problem for the rest of the paper. In the offline setting, we give a $2$-approximation algorithm (Section \ref{sec:offline}). It is an adaptation of Belady's algorithm to the multi-agent setting. The analysis utilizes a potential function that was recently proposed to give an alternative proof of optimality for Belady's algorithm \cite{bansal2022learning}.
It tracks how far in the future the cached pages are for the algorithm vs.\ the optimum.

\begin{restatable}{theorem}{offlinethm}\label{thm:offline}
  There is a $2$-approximation algorithm for offline caching with reserves.
\end{restatable}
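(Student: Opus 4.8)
The plan is to analyze the natural adaptation of Belady's rule, which I will call $\FFIF$ (``forced farthest-in-future''). At a given moment, relative to a pending insertion of a page of agent $a$, call a cached page $q$ of agent $b$ \emph{safe} if evicting $q$ leaves agent $b$ with at least $k_b$ cached pages, and \emph{unsafe} otherwise (its owner is \fettered\ at its reserve). On a fault for a page $p$ of agent $a$ with a full cache, $\FFIF$ brings in $p$ and evicts the safe page whose next request is farthest in the future. First I would check that $\FFIF$ is well-defined: since feasibility requires $\sum_i k_i \le k$, whenever the cache is full and a fault occurs at least one safe page exists --- otherwise every agent sits exactly at its reserve and the cache would hold at most $\sum_i k_i \le k$ pages even after the insertion, a contradiction. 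When all $k_i = 0$, $\FFIF$ is exactly Belady's algorithm.

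Fix an optimal offline solution $\OPT$, and write $A_t, O_t$ for the caches of $\FFIF$ and $\OPT$ just before the $t$-th request. The argument is amortized, through the potential $\Phi_t$ of~\cite{bansal2022learning} that was introduced to reprove optimality of Belady's algorithm: $\Phi_t$ orders the not-yet-requested distinct pages by their next-request time and measures, prefix by prefix, how much of this order $\OPT$ is ``holding ahead of'' $\FFIF$ --- concretely, for each $j$ up to the number of distinct upcoming pages, the deficit $\max\{0,\ |P_j \cap O_t| - |P_j \cap A_t|\}$, where $P_j$ is the set of the $j$ soonest-to-be-requested pages, summed over $j$. It satisfies $\Phi_t \ge 0$ and equals $0$ whenever $A_t$ is Belady-optimal for the remaining sequence; in particular $\Phi_0 = 0$ for empty initial caches. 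The target is a per-request amortized inequality of the form
\[
  \Delta\cost_{\FFIF}(t) \;+\; \Phi_{t+1} - \Phi_t \;\le\; 2\,\Delta\cost_{\OPT}(t),
\]
which, summed over all $t$ using $\Phi \ge 0$ and $\Phi_0 = 0$, yields $\cost(\FFIF) \le 2\,\cost(\OPT)$.

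To prove the per-request inequality I would process a request in three sub-steps --- $\OPT$ serves it (possibly evicting), then $\FFIF$ serves it (possibly evicting), then time advances past the served page --- and bound $\Delta\Phi$ in each. Advancing time does not increase $\Phi$, as the served page then lies in both caches; a $\FFIF$ hit changes nothing; and the remaining cases split according to whether $r_t$ is a hit or a miss for $\OPT$ and for $\FFIF$. The delicate case is a $\FFIF$ miss while $\OPT$ already holds $r_t$: then $\FFIF$ inserts $r_t$ and evicts its farthest safe page $q$, and we need $\Phi$ to drop by at least $1$ (so that $1 + \Delta\Phi \le 0$). In the reserve-free setting $q$ is simply the globally farthest cached page and this is the familiar step in the Belady-optimality proof; the remaining cases ($\OPT$ evicting while $\FFIF$ hits, or both evicting) must be checked to cost at most the $2\,\Delta\cost_{\OPT}(t)$ on the right, and I expect these to go through with moderate bookkeeping once the potential's behavior under a single eviction is pinned down.

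The main obstacle is exactly that delicate case in the presence of reserves, and it is where the factor $2$ --- rather than $1$ --- is spent. The globally farthest cached page $p^\star$ may be unsafe, locked in because its owner $b$ is \fettered\ in $A_t$, so $\FFIF$ is forced to evict a $q$ requested far earlier than $p^\star$ and the deficit potential need not drop by a full unit on its own. The lever I would use to recover the bound is the reserve itself: if agent $b$ is tight in $A_t$ (holding exactly $k_b$ pages), then $\OPT$ also holds at least $k_b$ pages of agent $b$, so the number of agent-$b$ pages by which $\OPT$ can lie ``ahead of'' $\FFIF$ is itself capped by $k_b$; aggregating this over all simultaneously-tight agents lets one account for the missing decrease in $\Phi$, at the price of an extra factor $2$. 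Turning this into the clean per-step inequality requires a careful per-agent matching of $A_t$ against $O_t$ and a charging scheme that blames the shortfall on tight agents, and this is the technical heart of the proof. I would also have to dispose of the boundary situations: the cache not yet full, agents not all of whose pages have appeared (so their effective reserve is smaller than $k_b$), and tight agents for which $\OPT$ in fact holds strictly more than $k_b$ pages.
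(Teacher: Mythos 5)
Your high-level plan (adapt Belady's rule, use the potential of \cite{bansal2022learning}, and prove a per-request amortized inequality $\Delta\ALG + \Delta\Phi \le 2\,\Delta\OPT$) matches the paper's strategy, but the proposal has a genuine gap precisely at the point you flag as ``the technical heart'': the delicate case is never resolved, and the tools you set up are not the ones that resolve it. The paper's key structural idea, absent from your sketch, is to maintain an explicit \emph{partition} of the algorithm's cache into sets $N_0, N_1, \dots, N_m$ with $|N_i| = k_i$ for $i \ge 1$ (an arriving page of agent $i$ always ``passes through'' $N_i$, pushing the farthest-in-future page of $N_i$ into $N_0$, and the eviction is the farthest-in-future page of $N_0$), together with a \emph{co-evolving partition $N_i^*$ of $\OPT$'s cache} that the analysis itself chooses. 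The potential then decomposes per set, $\Phi = \sum_i \phi_i$ with $\phi_i = \max_s [\,n_i(s) - n_i^*(s)\,]$, and the whole argument is a case analysis on how pages move between these sets. Your single global prefix-deficit potential compares $A_t$ and $O_t$ as unstructured sets and offers no handle on which agent's reserve forced a suboptimal eviction; the ``lever'' you propose (that $\OPT$'s lead on a \fettered{} agent is capped by $k_b$) is a static bound and does not produce the required unit \emph{decrease} of $\Phi$ when the algorithm evicts, so the delicate case remains open as written.

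Two further mismatches are worth noting. First, your algorithm (``evict the farthest-in-future \emph{safe} page'') is not the paper's algorithm: because of the pass-through discipline, the page the paper evicts is the farthest page of $N_0$, which need not be the farthest safe page overall (a safe page of an agent with surplus may sit inside that agent's reserved set $N_i$ with a larger rank), and no correctness proof is supplied for your variant. Second, your account of where the factor $2$ is spent is off: in the paper's accounting the algorithm's own eviction is always fully paid for by a unit drop in $\phi_0$, and the factor $2$ arises when $\OPT$ evicts a page from a reserved set $N_j^*$, which forces \emph{two} modifications to the $N^*$-partition (promoting some $q' \in N_0^*$ of agent $j$ into $N_j^*$ and deleting the evicted page), each costing at most $+1$ in potential. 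Without the partitioned potential and the maintained $\OPT$-partition, the charging scheme you gesture at does not assemble into the claimed inequality.
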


In the \emph{online} scenario, where the algorithm knows nothing about page requests until they occur, we give a fractional algorithm (which may keep pages fractionally in cache) using the primal-dual framework (Section \ref{sec:online}). 

\begin{restatable}{theorem}{fractionalthm}\label{thm:fractional}
  There is a $2 \ln (k+1)$-competitive fractional  algorithm for online caching with reserves.
\end{restatable}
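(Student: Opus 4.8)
The plan is to follow the standard online primal-dual recipe for caching problems, adapted to handle the per-user reserve constraints. First I would set up a linear-programming relaxation for caching with reserves. Using the usual ``one variable per (page, request-interval)'' formulation for caching, let $x_{p,j}$ denote the fraction by which page $p$ is \emph{evicted} in the $j$-th phase between two consecutive requests to $p$; the caching-with-reserves constraints are (i) the global capacity constraint $\sum_p (\text{fraction of } p \text{ in cache}) \le k$ at every time, rewritten as a covering constraint on the eviction variables, and (ii) for each user $i$ a reserve constraint stating that the total amount of user-$i$ pages in cache is at least $k_i$ at every time, equivalently that the amount of user-$i$ mass \emph{evicted} is small. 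The objective is to minimize the total eviction mass (each eviction of an already-requested page costs $1$). I would take the LP dual, which will have one nonnegative variable per time step for the global constraint and one per (user, time step) for each reserve constraint, appearing with opposite sign.

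Next I would describe the online fractional algorithm as a continuous process driven by the dual. When a page $p$ is requested, it must be brought fully into cache ($x_{p,\cdot}=0$ for its current interval); to make room we must increase eviction variables of other pages. Following Bansal--Buchbinder--Naor, I would raise the relevant dual variable at a controlled rate and, in response, increase each eligible eviction variable $x_{q,j}$ according to a multiplicative-plus-additive rule of the form $\frac{d x_{q,j}}{dt} \sim \frac{1}{\ln(k+1)}\big(x_{q,j} + \frac{1}{k}\big)$, which is exactly what yields the $O(\ln k)$ competitive ratio in ordinary caching. The new ingredient is that a page $q$ belonging to user $i$ is declared \emph{ineligible} for eviction whenever evicting it further would violate user $i$'s reserve constraint; at such moments the corresponding reserve dual variable for user $i$ is raised instead, which both certifies feasibility of the dual and charges the ``blocked'' evictions to a different term. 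I would then argue primal feasibility is maintained by construction, dual feasibility holds because the growth rule is calibrated so each dual constraint stays satisfied, and the primal cost is within a $2\ln(k+1)$ factor of the dual objective by the standard differential-inequality bookkeeping, with the factor $2$ absorbing the two families of dual variables (global and reserve).

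I expect the main obstacle to be handling the interaction between the two constraint families cleanly: when raising eviction mass, some pages become reserve-blocked partway through a single request's processing, so the set of eligible pages shrinks over the course of one continuous step, and one must verify that (a) there is always \emph{enough} eligible mass to evict — i.e., the process never gets stuck — and (b) the reserve dual variables can be charged without double-counting against the global dual variable. Point (a) should follow from a counting argument: if fewer than the needed mass is eligible, then the reserve constraints are collectively tight enough that the total required cache occupancy exceeds $k$, contradicting that a feasible (e.g. optimal offline) solution exists on this instance. For the competitive-ratio computation I would mirror the potential/telescoping argument of the primal-dual caching analysis, checking that each infinitesimal increase of primal cost is matched by a $\tfrac{1}{2\ln(k+1)}$-fraction increase in the dual objective, summed over both the global and the reserve dual contributions; the boundary terms and the ``$+1/k$'' additive seeding are what produce the $\ln(k+1)$ rather than $\ln k$. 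Finally I would note that the resulting algorithm is indeed online since every update depends only on past requests and the current one.
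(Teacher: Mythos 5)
Your proposal is correct and follows essentially the same route as the paper: the same interval-based primal LP with a global covering constraint and per-agent reserve packing constraints, the same multiplicative-plus-additive update rule seeded by $\eta = 1/k$, the same device of freezing pages of reserve-tight agents while raising the corresponding reserve dual variables, and the same counting argument showing the process never gets stuck. The only cosmetic difference is where the $\ln(k+1)$ normalization lives (you fold it into the update rate, the paper keeps the dual unscaled and shows it is $\ln(k+1)$-approximately feasible), and your attribution of the factor $2$ to the two dual families is slightly off—it comes from the additive $\eta$ term exactly as in classical primal-dual caching—but this does not affect the validity of the approach.
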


We also show that the fractional solution can be rounded online in a way that preserves the competitive ratio up to a constant, obtaining an online randomized (integral) algorithm (Section \ref{sec:rounding}).

\begin{restatable}{theorem}{integralthm}\label{thm:integral}
  There is an $O(\ln k)$-competitive integral algorithm for online caching with reserves.
\end{restatable}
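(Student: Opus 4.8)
The plan is to convert the fractional algorithm of Theorem~\ref{thm:fractional} into a randomized integral algorithm online, at the cost of only a constant factor. Normalize the fractional solution so that the cache is always exactly full and write $x_p(t)\in[0,1]$ for the fraction of page $p$ held in cache at time $t$; thus $\sum_p x_p(t)=k$, $\sum_{p\in U_i} x_p(t)\ge k_i$ for every user $i$, and $x_{p_t}(t)=1$ for the page $p_t$ requested at time $t$. The fractional cost is the total mass ever moved into the cache, $\sum_t \sum_p (x_p(t)-x_p(t^-))^+$, which by Theorem~\ref{thm:fractional} is $O(\ln k)\cdot\OPT$. So it suffices to maintain online a random feasible integral cache $S_t$ such that, over the whole request sequence, $\E\big[\#\{\text{pages brought into }S_t\}\big]$ is $O(1)$ times the fractional cost, with $p_t\in S_t$ always.

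The conceptual starting point is that, for each fixed $t$, the feasibility region
\[
  P_t=\Big\{\,x\in[0,1]^{\text{pages}}\ :\ \textstyle\sum_p x_p=k,\ \ \sum_{p\in U_i}x_p\ge k_i\ \ \forall i,\ \ x_{p_t}=1\,\Big\}
\]
is an \emph{integral} polytope: the sets constraining $x$ --- the singletons $\{p\}$, the user classes $U_i$, and the full page set --- form a laminar family, so the constraint matrix is a network matrix, hence totally unimodular, and all right-hand sides are integral. Consequently $x(t)$ is at all times a convex combination of indicator vectors of feasible integral caches, and we maintain such a combination --- i.e.\ a distribution $\D_t$ over feasible integral caches with $\Pr_{S\sim\D_t}[p\in S]=x_p(t)$ --- together with a coupling of $\D_t$ with $\D_{t-1}$ of small expected cost. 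Concretely, since the primal-dual fractional algorithm can be taken to process each request by first raising $x_{p_t}$ to $1$ and then only decreasing the in-cache fractions of other pages (a monotonicity property typical of such algorithms), $\D_t$ can be maintained within a request by a priority/threshold scheme: for each user $i$ keep a random permutation of its pages consistent with $x$ (so that the probability a page is ``early'' equals its in-cache fraction); the $k_i$ earliest pages form user $i$'s \emph{reserved core}, always integrally in cache, while the remaining $k-\sum_i k_i$ slots are filled, across all users, by a classical-caching-style rounding driven by the residual fractional values. A page enters the integral cache only when its fractional value rises or when a swap in some user's permutation or in the shared pool promotes it, and in each case the expected amount of such movement is bounded by the fractional movement that triggered it.

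For the accounting I would introduce a potential equal to the expected number of pages on which $S_t$ and a canonical integral rounding of $x(t)$ disagree, and show that each unit of fractional cost pays for $O(1)$ expected integral faults, using that the requested page carries no cost at the integral level because $x_{p_t}=1$ forces $p_t$ into every cache in the support of $\D_t$. Summing over the sequence yields $\E[\ALG]=O(1)\cdot\ALG_{\mathrm{frac}}=O(\ln k)\cdot\OPT$, which is Theorem~\ref{thm:integral}.

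The main obstacle is the online coupling in the presence of the reserve constraints: one must show that consecutive distributions $\D_{t-1},\D_t$ admit a coupling whose expected symmetric difference is within a constant factor of $\|x(t)-x(t-1)\|_1$, and that this coupling can be produced incrementally. The delicate interactions are (i) a fractional transfer of mass from a page of user $i$ to a page of user $j\ne i$, where a cache sitting at its reserve floor for user $i$ cannot simply release a user-$i$ page and must instead route the change through another, slack-carrying user-$i$ page --- feasibility of the fractional solution guarantees such slack exists, but turning this into a bounded-cost move requires care --- and (ii) moments when a page drops out of (or rises into) some user's reserved core, which must not cascade into many evictions. Establishing that these updates cost $O(1)$ per unit of fractional change --- equivalently, that the transportation distance between consecutive integral distributions is $O(\|x(t)-x(t-1)\|_1)$ even with the reserve inequalities present --- is where essentially all the work lies; the polytope-integrality claim and the final summation are routine.
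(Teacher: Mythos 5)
Your high-level framing matches the paper's: both maintain, online, a distribution over feasible integral cache states whose marginals agree with the fractional solution, and both reduce the theorem to showing that consecutive distributions admit a coupling of expected transportation cost $O(1)$ times the fractional movement. But you stop exactly at the step that constitutes the proof. You write that establishing the $O(1)$-per-unit-of-fractional-change bound ``is where essentially all the work lies,'' and you do not establish it; the permutation/reserved-core mechanism you sketch is never shown to have bounded cost, and the two difficulties you flag --- routing an eviction through a slack-carrying page of the same agent when another agent sits at its reserve floor, and avoiding cascades when a page crosses a reserve boundary --- are precisely the ones left unresolved. The total-unimodularity observation, while true, is also not what is needed: existence of an integral decomposition at each time $t$ says nothing about the cost of moving between the decompositions at $t-1$ and $t$.

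The paper closes this gap with a direct three-phase repair argument on an elementary move in which one page $p$ gains $\epsilon$ mass and one page $q$ loses $\epsilon$ mass. Phase 1 adds $p$ to an $\epsilon$ measure of states lacking it and removes $q$ from an $\epsilon$ measure of states containing it (cost $\epsilon$); the key structural facts are that afterwards every state has size in $\{k-1,k,k+1\}$ and any state violating a reserve constraint has size at most $k$. Phase 2 matches the $\epsilon_1 \le \epsilon$ measure of size-$(k-1)$ states against the equal measure of size-$(k+1)$ states and transfers one page across each matched pair (cost $\le \epsilon$), using that a size-$(k+1)$ state cannot itself be reserve-violating. Phase 3 handles the at most $2\epsilon$ measure of states violating a single reserve constraint for some agent $i$: because the new fractional solution satisfies agent $i$'s reserve in expectation over the distribution, every deficient state can be matched to a state with strictly more than $k_i$ pages of agent $i$, from which a page of $\U(i)$ is moved over (cost $\le 2\epsilon$). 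This totals $4\epsilon$ and resolves both of your flagged obstructions by an averaging argument over the support of the distribution rather than by a per-state routing rule. To complete your proof you would need to supply an argument of this kind (or verify your permutation scheme achieves the same bound), since as written the central claim is asserted rather than proved.
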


\section{Preliminaries and Notation}
\label{sec:prelims}

Let $\U$ be a universe of $n$ pages and $k$ be the number of distinct pages that can be stored in the cache at any time. In the classical caching problem, a sequence of page requests $\sigma = \langle p_1, p_2, \ldots \rangle$, where each $p_t \in \U$,  arrives online and the algorithm is required to maintain a set of at most $k$ pages to be held in the cache at any time. At time $t$, if the currently requested page $p_t$ is not in the cache, then a \emph{cache miss} occurs and the algorithm incurs unit cost. It must then fetch page $p_t$ into the cache possibly by evicting some other page from the cache. An \emph{online} algorithm makes the eviction choice without knowing the future request sequence, whereas an \emph{offline} algorithm is assumed to know the entire request sequence in advance. 

Motivated by applications in multi-processor caching and shared cache systems, we define two new related problems. Let
$\I = \{1, \ldots, m\}$ be a set of $m$ agents and suppose that the universe $\U$ is a disjoint union of pages belonging to each agent, i.e., $\U = \sqcup_{i \in \I} \U(i)$. Let $n_i = |\U(i)|$ be the number of distinct pages owned by agent $i$. For any page $p \in \U(i)$, let $ag(p) = i$ denote the agent that owns page $p$. In the \emph{public-private caching} model, the cache of total size $k$ is subdivided as follows: each agent $i \in \I$ is allocated $k_i$ cache slots and the remaining $k_0 \triangleq k - \sum_{i \in \I} k_i$ slots are \emph{public}.\footnote{We assume throughout the paper that $\sum_{i\in \I} k_i < k$. If $\sum_{i\in \I} k_i = k$, the problem can be solved as $m$ separate instances of classical caching.} In this model, only pages belonging to agent $i$ can be placed in any of the $k_i$ cache slots allocated to agent $i$, while any page can be held in the public slots. As in the traditional caching problem, the goal of the algorithm is to minimize the total number of evictions. 
In the \emph{caching with reserves} model, the cache is not divided, but instead for each agent $i \in \I$, the algorithm is required to maintain at least $k_i$ pages from $\U(i)$ in the cache at any time. To help meet this constraint, it is allowed to begin with dummy pages in its cache that never occur in the actual sequence.

We analyze the online algorithm in terms of its \emph{competitive ratio}. This is the maximum ratio, over all possible problem instances, of the cost incurred by the algorithm to the cost of the optimal offline solution of this instance.
\section{Equivalence of Public-Private Caching and Caching with Reserves}\label{apx:equiv}

We now prove Proposition \ref{prop:equivalent} (restated below for convenience), showing the two models defined in the introduction are equivalent up to constant factors.

\equivalentprop*

\newcommand{\evictions}{\text{evictions}}
\newcommand{\easytransformation}{\tau_e}
\newcommand{\hardtransformation}{\tau_h}
\newcommand{\cropt}{\mathcal{O}_{cr}}
\newcommand{\ppcopt}{\mathcal{O}_{ppc}}

\begin{proof}
  We first explain how to convert back-and-forth between caching strategies for the two problems. Note that both of the following conversions can be done ``online'', i.e. if we know what to evict right now from the cache for one problem, we can determine what to evict right now from the cache for the other problem. The easy direction is turning a public-private caching strategy into a caching with reserves strategy. We will maintain that the cache states in the two problems are identical after every page request. Suppose a page request $p$ comes in. If $p$ is in cache, then we do not evict in either strategy. If it is not, then the public-private caching strategy evicts some page $q$ to make room for it. Our caching with reserves strategy can do so as well while maintaining the reserve constraint, as the following case work shows:
  \begin{itemize}
    \item If $q$ was in a private cache, then $p$ winds up in the same private cache and hence they had the same agent. Hence this agent still has the same number of pages in cache as before for our caching with reserves algorithm.
    \item If $q$ was in a public cache, then $q$'s agent $i$ has at least $k_i+1$ pages in cache before this step (the $k_i$ pages in its private cache
    and $q$). Evicting $q$ hence does not put agent $i$ below its reserve for our caching with reserves algorithm.
  \end{itemize}
  We are now ready to handle the hard case of turning a caching with reserves strategy into a public-private caching strategy. To keep the analysis clean, we cheat slightly and permit the public-private caching strategy to perform extra evictions at any step (but it is still charged for each one). Suppose a page request $p$ comes in. If $p$ is in cache, then we do not evict in either strategy. If it is not, then the caching with reserves strategy evicts some page $q$ to make room for it, which belongs to some agent $i$. We can handle this with at most two evictions, as the following case work shows:
  \begin{itemize}
    \item If $q$ is currently in the public cache, then we evict it and replace it with $p$, making the two caches match again.
    \item If $q$ is currently in a private cache and the agent of $p$ is also $i$, then we again can evict it and replace it with $q$, making the two caches match again.
    \item If $q$ is currently in a private cache and the agent of $p$ is not $i$, then we can infer that agent $i$ has some other page, $q'$, in public cache. This is because $\A$ was able to evict $q$ for $p$ while satisfying agent $i$'s reserve afterwards, so there must have been at least $k_i+1$ pages of agent $i$ in cache at the start of this step. We evict both $q$ and $q'$ and then place $q'$ into agent $i$'s private cache and $p$ into public cache.
  \end{itemize}
  
  We now have conversions between the two problems that approximately preserve the number of evictions, and are ready to prove the main claim. We will use $\easytransformation$ to denote the first transformation, from public-private caching strategies into caching with reserves strategies. We will use $\hardtransformation$ to denote the second transformation, from caching with reserves strategies to public-private caching strategies.
  
  Suppose we have some algorithm $\A$ for caching with reserves, and let $\A' \triangleq \hardtransformation(\A)$. Furthermore, let the optimal solutions to caching with reserves and public-private caching be $\cropt$ and $\ppcopt$, respectively.
  \begin{align*}
    \evictions(\A') &\le 2 \cdot \evictions(\A)                             &\qquad \text{Transformation Guarantee} \\
                    &\le 2c \cdot \evictions(\cropt)                        &\qquad \A \text{ is a $c$-approximation} \\
                    &\le 2c \cdot \evictions(\easytransformation(\ppcopt))  &\qquad \cropt \text{ Optimality} \\
                    &\le 2c \cdot \evictions(\ppcopt)                       &\qquad \text{Transformation Guarantee}
  \end{align*}
  Similarly, suppose we have some algorithm $\B$ for public-private caching and let $\B' \triangleq \easytransformation(\B)$. Again, let the optimal solutions to caching with reserves and public-private caching be $\cropt$ and $\ppcopt$, respectively.
  \begin{align*}
    \evictions(\B') &\le \evictions(\B)                                   &\qquad \text{Transformation Guarantee} \\
                    &\le c \cdot \evictions(\ppcopt)                      &\qquad \B \text{ is a $c$-approximation} \\
                    &\le c \cdot \evictions(\hardtransformation(\cropt))  &\qquad \ppcopt \text{ Optimality} \\
                    &\le 2c \cdot \evictions(\cropt)                      &\qquad \text{Transformation Guarantee}
  \end{align*}
  This completes the proof.
\end{proof}
\section{Offline Caching with Reserves}
\label{sec:offline}

In this section, we present a $2$-approximation algorithm for the offline caching with reserves problem. The algorithm itself can be thought of as a generalization to Belady's classic Farthest-in-Future algorithm \cite{belady}.  Indeed, the algorithm we present reduces to it in the trivial case that $k_i = 0$ for all $i$. 
However, in general, in our setting, there are cases where the farthest-in-future page cannot be evicted due to the reserve constraints.

Our algorithm maintains a partition of the pages in cache into sets $N_i$. For $i>0$, the set $N_i$ consists only of pages for agent $i$; further, we maintain $|N_i| = k_i$ at the beginning of each time step. The set $N_0$ contains the remaining cached pages. 
When a page $p$ associated with agent $i$ arrives and is not already in cache, we insert it into $N_i$. This causes $|N_i| = k_i+1$, so we move the farthest-in-future page from $N_i$ to $N_0$. This, in turn, causes $N_0$ to be too large. So we evict the farthest-in-future page from $N_0$. Notice that we are always allowed to evict such a page, since we maintain $k_i$ pages of agent $i$ in each $N_i$. In the case that $p$ arrives but is already in $N_0$, we first move it to $N_i$, then proceed similarly. In this way, an arriving page always ``passes through'' $N_i$. The full details are in Algorithm \ref{alg:offline}.

Our analysis proving the 2-approximation generalizes a potential argument for Belady's algorithm (proposed recently \cite{bansal2022learning}), but is technically more complicated due to the multi-tiered approach we take. The proof compares our sets $N_i$ with sets $N^*_i$ for the optimal algorithm. (To be more precise, the optimal algorithm maintains a certain set of pages in cache at each time step. We define a partition of these pages into the $N^*_i$ such that each $N^*_i$ consists only of pages from agent $i$, and $|N^*_i| = k_i$ at the beginning of each time step.) We call any page's next request time its \emph{rank}. We define, for any rank $s$, the value $n_i(s)$ to be the number of pages in the set $N_i$ with rank at least $s$ at a given time. Similarly, $n^*_i(s)$ is the number of pages in the set $N^*_i$ with rank at least $s$.\footnote{The sets $N_i$ and $N^*_i$ and the quantities $n_i(s)$ and $n^*_i(s)$ vary over time, but we suppress the dependence on $t$ in the notation for brevity.}

We define our potential function as
%
\[\Phi = \sum_{i=0}^m \phi_i  \text{ ,~~ where } \phi_i = \max_{s}\  [ n_i(s) - n^*_i(s)].\]
Notice that $\phi_i \ge 0$ for every $i$, because when $s$ is larger than the rank of any page in cache, we have $n_i(s) = n^*_i(s) = 0$.
Hence $\Phi \ge 0$.

\begin{algorithm}[t]
\DontPrintSemicolon
Let $N \leftarrow$ set of pages in the cache initially\;
Partition $N = \sqcup_{i=0}^m N_i$ where each $N_i$ (for $i \neq 0)$ contains some arbitrary $k_i$ pages belonging to agent $i$ and $N_0$ contains all the remaining pages\;
Set $rank(q)$, for each page $q$, to the time of $q$'s first request\;
\For{each requested page $p$} {
Let $i=ag(p)$\;
\uIf(\tcc*[f]{Cache hit in a set reserved for $i$.}){$p \in N_i$}{
 Serve page $p$ from cache\;
}
\uElseIf(\tcc*[f]{Cache hit in a set not reserved for $i$.}){$p\in N_0$}{
    Serve page $p$ from cache\;
    \texttt{/* Move $p$ from $N_0$ to $N_i$. */}\;
    $N_i \leftarrow N_i \cup \{p\}$ and $N_0 \leftarrow N_0 \setminus \{p\}$\;
    \texttt{/* Move highest-ranked page from $N_i$ to $N_0$. */}\;
    Let $q_i \in N_i$ be the page in $N_i$ with maximum rank (if $k_i=0$, this will be $p$)\;
    $N_i \leftarrow N_i \setminus \{q_i\}$ and $N_0 \leftarrow N_0 \cup \{q_i\}$
}
\Else(\tcc*[f]{Cache miss.})  {
    \texttt{/* Add $p$ to $N_i$, then move highest-ranked page from $N_i$ to $N_0$. */}\;
    $N_i \leftarrow N_i \cup \{p\}$\;
    Let $q_i \in N_i$ be the page in $N_i$ with maximum rank (if $k_i=0$, this will be $p$)\;
    $N_i \leftarrow N_i \setminus \{q_i\}$ and $N_0 \leftarrow N_0 \cup \{q_i\}$\;
    
    \texttt{/* Evict highest-ranked page from $N_0$. */}\;
    Let $q$ be the page in $N_0$ with maximum rank ($q\neq p$ even if $q_i=p$)\;
     $N_0 \leftarrow N_0 \setminus \{q\}$\;
     \emph{Evict} page $q$, \emph{fetch} page $p$ into cache and serve it\;
 }
 Set $rank(p)$ to the time of $p$'s next request (if none, set it later than the last request)\;
}
\caption{Offline algorithm for caching with reserves.}
\label{alg:offline}
\end{algorithm}

We show that Algorithm \ref{alg:offline} satisfies the requirements of Theorem \ref{thm:offline} (restated below).

\offlinethm*
The proof requires repeated reasoning about how the potential $\Phi$ changes with each step. For example, adding a page to $N_i$ will increase $\phi_i$ by at most 1 (and possibly leave it unchanged). However, adding a page $p$ to $N_i$ whose rank is higher than anything in $N^*_i$ guarantees that $\phi_i$ will increase by exactly 1 (since $n_i(s)$ increases by 1 for every $s \le rank(p)$).

Initially let $N^*_i = N_i$ for all $i$ from 0 to $m$ (the sets $N_i$ are initialized by Algorithm~\ref{alg:offline}). 
Let $ALG$ be the cost incurred by Algorithm \ref{alg:offline} and $OPT$ be the cost incurred by an optimal algorithm. 
Let $\Delta(ALG)$, $\Delta(\Phi)$, $\Delta(OPT)$ be incremental changes in $ALG$, $\Phi$, $OPT$, respectively, with older value subtracted from the newer value.

\begin{restatable}{lemma}{deltaslemma}\label{lem:deltas}
The runs of Algorithm \ref{alg:offline} and of the optimal algorithm on a given sequence of page requests can be partitioned into steps such that for each step, $\Delta(ALG) + \Delta(\Phi) \leq 2\cdot\Delta(OPT)$.  
\end{restatable}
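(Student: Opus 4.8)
The plan is to prove Lemma~\ref{lem:deltas} by a careful case analysis on what kind of event occurs at each time step, and within each case by tracking the effect of every atomic move (fetching $p$ into $N_i$, demoting $q_i$ from $N_i$ to $N_0$, evicting $q$ from $N_0$) on each term $\phi_i$ of the potential. The interleaving of $\ALG$'s and $\OPT$'s actions is the delicate part, so I would fix a convention: at a request for page $p$ with agent $i$, first let $\OPT$ do whatever it does (at most one eviction and one fetch, possibly rearranging its partition $N^*_j$), and then let $\ALG$ do its moves; the ``step'' of the lemma is this combined unit. Since $\OPT$ only moves pages of agent $i$ in and out and must keep $|N^*_j|=k_j$, I may need to define how $\OPT$'s partition is updated to mirror the structure of $\ALG$'s partition — e.g.\ when $\OPT$ brings $p$ into cache it enters $N^*_i$ and the max-rank page of $N^*_i$ spills to $N^*_0$, exactly as in Algorithm~\ref{alg:offline}. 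Establishing that this bookkeeping for $\OPT$ is well-defined and preserves $|N^*_j|=k_j$, $\phi_j\ge 0$ is the first step.

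Next I would handle the two ``free'' cases for $\ALG$. If $p\in N_i$ already, $\ALG$ pays nothing and changes nothing, so $\Delta(\ALG)=0$, $\Delta(\Phi)\le$ (whatever $\OPT$'s moves contribute), and since $\OPT$ pays at least $0$ and the $\OPT$-side potential change is itself bounded by $2\Delta(\OPT)$ — which I will verify as a sub-claim — we are done. If $p\in N_0$ for $\ALG$, again $\Delta(\ALG)=0$; $\ALG$ promotes $p$ to $N_i$ (this increases $\phi_i$ by at most $1$) and demotes $q_i$ from $N_i$ to $N_0$. The key observation, exactly the one flagged after the lemma statement, is that moving the \emph{max-rank} page out of $N_i$ cannot increase $\phi_i$: for any threshold $s$, removing the globally-highest-ranked page of $N_i$ decreases $n_i(s)$ by $1$ only when $s$ is above all other ranks in $N_i$, and at such $s$ we have $n_i(s)\le 1$ and $n^*_i(s)\ge 0$, so the max defining $\phi_i$ is unaffected. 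Symmetrically, the arrival of $p$ into $N_0$ (conceptually) and the departure of $p$ from $N_0$ need to be reconciled; because net effect on $N_0$ is the addition of $q_i$, I bound $\Delta\phi_0\le 1$. But here is where I expect the real work: in the hit-in-$N_0$ case $\ALG$ pays $0$ while $\Phi$ could rise by up to $2$, so I must extract a compensating decrease. The resolution should be that the page $p$ that moved from $N_0$ into $N_i$ now has rank equal to its \emph{next} request time, which is strictly later than the current time, so if $p$'s rank exceeds everything in $N^*_i$ the increase in $\phi_i$ is offset by a matching increase that $\OPT$ already ``paid for'' when it placed $p$; formalizing this requires relating $n_i$ and $n^*_i$ after $\OPT$'s mirrored move, and this reconciliation between the two partitions is the main obstacle.

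For the genuine cache-miss case, $\Delta(\ALG)=1$. Here $\OPT$ must also not have $p$ in cache (if it did, I would be in a different, easier sub-case handled by the interleaving), so $\Delta(\OPT)\ge 1$ as well, giving a budget of $2$ on the right and $1$ already consumed on the left, leaving $1$ for $\Delta(\Phi)$. I would then show $\Delta(\Phi)\le 1$: adding $p$ to $N_i$ gives $+1$ to $\phi_i$ at most, demoting the max-rank $q_i$ gives $\le 0$ to $\phi_i$ by the observation above and $\le +1$ to $\phi_0$; evicting the max-rank $q$ from $N_0$ gives $\le 0$ to $\phi_0$ by the same argument; meanwhile $\OPT$ fetching $p$ and evicting its own farthest page, when mirrored on the $N^*$ side, contributes a decrease to $\phi_i$ (since $n^*_i(s)$ goes up) that cancels $\ALG$'s $+1$, and the demotion/eviction on $\OPT$'s side is handled symmetrically. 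Summed up this should give $\Delta(\Phi)\le 1 \le 2\Delta(\OPT)-\Delta(\ALG)$. Throughout, the recurring lemma I will isolate and prove once is: \emph{removing from (resp.\ adding to) a set $S$ its maximum-rank element changes $\max_s[n_S(s)-n^*_S(s)]$ by at most $0$ in the decreasing (resp.\ $1$ in the increasing) direction, and adding the max-rank element to $S^*$ decreases the max by at least $1$ if that element's rank dominates}, which packages all the monotonicity facts needed. The main obstacle, to reiterate, is the hit-in-$N_0$ case, where careful alignment of $\ALG$'s and $\OPT$'s partitions is required to find the hidden credit; everything else is routine once the max-rank lemma and the $\OPT$-mirroring convention are in place.
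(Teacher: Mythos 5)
Your overall architecture (per-request decomposition into atomic moves, a reusable max-rank observation, and a convention for co-evolving $N^*_i$ with $N_i$) matches the paper's, but two of your load-bearing claims are wrong, and they are exactly where the difficulty of the lemma lives.

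First, in the cache-miss case you assert that $\OPT$ must also miss on $p$, so $\Delta(\OPT)\ge 1$. This is false, and it is not an ``easier sub-case'': when $\ALG$ misses but $\OPT$ hits, $\Delta(\OPT)=0$, so you need $\Delta(\Phi)\le -1$, i.e.\ the potential must \emph{strictly} decrease to pay for $\ALG$'s eviction. (If $\ALG$'s misses were always a subset of $\OPT$'s, the algorithm would be optimal and no potential would be needed.) Your max-rank lemma only delivers ``removal of the max-rank element does not increase $\phi$,'' which is insufficient. The missing mechanism is a size imbalance: just before $\ALG$ evicts, $|N_0|=k_0+1$ while $|N^*_0|=k_0$, which forces $\phi_0\ge 1$; and since the evicted page $q$ has maximum rank in $N_0$, $\delta_0(s)\le 0$ for $s>rank(q)$, so the maximum is attained at some $s\le rank(q)$, where every $\delta_0(s)$ drops by exactly $1$. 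Hence $\Delta\phi_0=-1$, which cancels $\Delta(\ALG)=1$ regardless of what $\OPT$ does.

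Second, your convention that $\OPT$'s partition mirrors $\ALG$'s rule (``the max-rank page of $N^*_i$ spills to $N^*_0$'') breaks the invariant that the rebalancing step leaves $\Phi$ non-increasing. Consider a request where $p\in N_i$ already (so $|N_i|$ stays at $k_i$ and $\Delta(\ALG)=0$) while $\OPT$ holds $p$ in $N^*_0$ (so $\Delta(\OPT)=0$ and, after inserting $p$, $|N^*_i|=k_i+1$ must be rebalanced). If $N_i$ and $N^*_i$ agree on their $k_i$ lowest-ranked pages and $N^*_i$'s extra page has the largest rank, then demoting that max-rank page decreases $n^*_i(s)$ by $1$ for all $s$ up to its rank, raising $\delta_i(s)$ from $0$ to $1$ on the shared range and hence $\phi_i$ from $0$ to $1$ --- with zero budget on the right-hand side. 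The fix, which is the real content of the rebalancing step, is a case-dependent choice of which page to demote from $N^*_i$: when $|N_i|<|N^*_i|$ after insertion (and in the case $|N_i|=|N^*_i|=k_i+1$ with $rank(q^*_i)<rank(q_i)$), one must demote the \emph{second-lowest}-ranked page of $N^*_i$ rather than its max-rank page; only then do all the changed $\delta_i(s)$ start out nonpositive (or is the $+1$ to $\phi_0$ offset by a forced $-1$ to $\phi_i$). Without these two ingredients the inequality $\Delta(\ALG)+\Delta(\Phi)\le 2\Delta(\OPT)$ fails on concrete instances, so the proposal as written does not go through.
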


Knowing Lemma \ref{lem:deltas}, 
the approximation factor of 2 now follows from summing over all the incremental steps indexed by $t$, where $\cdot(t)$ is the value of each function after step $t$.
We have $ALG(0) = \Phi(0) = OPT(0) = 0$ initially. By Lemma \ref{lem:deltas}, for each $t$,
\begin{align*}
    ALG(t) - ALG(t-1) + \Phi(t) - \Phi(t-1) &~\leq~ 2\cdot (OPT(t) - OPT(t-1)).\\
    \intertext{Summing over all $t$ (up to the last step $T$) and telescoping,}
    ALG(T) - ALG(0) + \Phi(T) - \Phi(0) &~\leq~ 2\cdot (OPT(T) - OPT(0))\\
    ALG(T) &~\leq~ 2 \cdot OPT(T),
\end{align*}
where the last inequality uses $\Phi(T) \geq 0$.

\begin{proof}[Proof of Lemma \ref{lem:deltas}]
To prove Lemma~\ref{lem:deltas}, 
we break the runs of Algorithm \ref{alg:offline} and the optimal algorithm (together with updates to sets $N^*_i$) into steps, and for each step show that 
$\Delta(ALG) + \Delta(\Phi) \leq 2\cdot\Delta(OPT)$. All the steps below constitute the processing of one request for a page $p$ belonging to agent $i$.
Let $\delta_i(s) = n_i(s) - n_i^*(s)$, so that $\phi_i = \max_s \delta_i(s)$.

\paragraph*{Step 1 (Add $p$ to both $N_i$ and $N^*_i$):}
Update $N_i \leftarrow N_i \cup \{p\}$ and $N^*_i \leftarrow N^*_i \cup \{p\}$. 

Neither $ALG$ nor $OPT$ changes in this step, since we don't evict anything. In addition,
the potential $\Phi$ doesn't increase. To see this, we'll use the fact that the rank of $p$ is the smallest among any page in cache (for our algorithm as well as for the optimal algorithm), since it is the page that has just arrived.
We consider four cases based on whether $N_i$ and $N^*_i$ contained $p$ before this step. 
\begin{itemize}[noitemsep,topsep=0pt]
    \item If both $N_i$ and $N^*_i$ contained $p$  already, then nothing changes.
    \item If neither contained it, then both $n_i(s)$ and $n^*_i(s)$ increase by 1 for all $s\leq rank(p)$, so their difference is unchanged.
    \item If $p$ was newly added only to $N^*_i$, then $\Phi$ can only decrease. 
    \item The remaining case is that $p$ was newly added  only to $N_i$.
Note that since $p$ is the page that was just requested (and its rank hasn't been updated to the next occurrence yet), it has the minimum rank of all pages. We prove that $\Phi$ doesn't increase by showing that before this step, $\phi_i \geq 1$, and after this step, any $\delta_i(\cdot)$ that might have changed are at most 1.
Specifically, before this step, $|N_i|=|N^*_i| = k_i$. Since $N_i$ did not contain $p$, and all other pages have higher rank, before this step we had $n_i(rank(p)+1) = k_i$. Since $ N^*_i$ contained $p$, we had $n^*_i(rank(p)+1) = k_i-1$. Thus, before this step, $\phi_i \geq \delta_i(rank(p)+1) = 1$. After this step, $n_i(s)=k_i+1$, $n^*_i(s)=k_i$, and $\delta_i(s)=1$ for $s\leq rank(p)$
(and $\delta_i(s)$ is unchanged for $s>rank(p)$). Thus, $\Phi$ doesn't increase.
\end{itemize}

\paragraph*{Step 2 (Remove $p$ from both $N_0$ and $N^*_0$):}
Update $N_0 \leftarrow N_0 \setminus \{p\}$ and  $N^*_0 \leftarrow N^*_0 \setminus \{p\}$. 

Again, $ALG$ and $OPT$ don't change since we make no evictions. Further, removing $p$ -- the lowest-ranked page in cache for both our algorithm and the optimal algorithm -- does not increase $\Phi$; the reasoning is similar to above.
\begin{itemize}[noitemsep,topsep=0pt]
 \item If neither $N_0$ nor $N^*_0$ changes, then $\Phi$ remains the same. 
 \item If $p$ is newly removed from both, then $n_0(s)$ and $n^*_0(s)$ decrease by 1 for all $s\leq rank(p)$, and $\delta_0(s)$ for all $s$ are unchanged.
 \item If $p$ is newly removed only from $N_0$, $\Phi$ can only decrease. 
 \item The remaining case is that $p$ was newly removed only from $N^*_0$. Before this step, $|N_0|=|N^*_0| = k_0$. Since $p$ is the page with minimum rank, before the step, $n_0(s)=n^*_0(s)=k_0$ for $s\leq rank(p)$. Also, since before the step $p\notin N_0$ and $p\in N^*_0$, we had $n_0(rank(p)+1)=k_0$ and $n^*_0(rank(p)+1)=k_0-1$, implying $\Phi \geq \delta_0(rank(p)+1)= 1$. After the removal of $p$, $n_0(s)=k_0$, $n^*_0(s)=k_0-1$ and $\delta_0(s)=1$ for $s\leq rank(p)$. Thus, $\Phi$ doesn't increase.
\end{itemize}

\paragraph*{Step 3 (Ensure $|N_i| = |N^*_i| = k_i)$:}
In Step 1, we added $p$ to $N_i$ (resp., $N^*_i$). If it wasn't already there, we increased the size by 1. If that happened, then in this step, we move a page from $N_i$ to $N_0$ to ensure $|N_i| = k_i$ (resp., move from $N^*_i$ to $N^*_0$ to ensure $|N^*_i| = k_i$). Let $q_i$ be the  page in $N_i$ with maximum rank. If $|N_i| = k_i+1$, then $q_i$ is moved to $N_0$, consistent with Algorithm \ref{alg:offline}. We choose which page to move from $N^*_i$ to $N^*_0$ based on the cases below. It could be the page $p$ itself if it is the only one available, the page $q\in N^*_i$ with minimum rank other than $p$ (so it actually has the second-minimum rank in $N^*_i$), or the page $q_i^*\in N^*_i$ with maximum rank. $ALG$ and $OPT$ don't change in this step, and in each case we show that $\Phi$ doesn't increase.
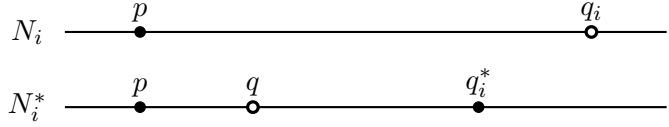
\begin{figure}
\centering
\begin{tikzpicture}

\node at (-0.5, 0) {$N_i^*$};
\draw[black, thick] (0,0) -- (8,0);
\filldraw[black] (1,0) circle (2pt) node[anchor=south]{$p$};
\filldraw[color=black, fill=white, very thick] (2.5,0) circle (2pt) node[anchor=south]{$q$};
\filldraw[black] (5.5,0) circle (2pt) node[anchor=south]{$q_i^*$};

\node at (-0.5, 1) {$N_i$};
\draw[black, thick] (0,1) -- (8,1);
\filldraw[black] (1,1) circle (2pt) node[anchor=south]{$p$};
\filldraw[color=black, fill=white, very thick] (7,1) circle (2pt) node[anchor=south]{$q_i$};

\end{tikzpicture}
\caption{Illustration for the last case of Step 3 in the proof of Lemma \ref{lem:deltas}.}
\label{fig:offline}
\end{figure}
\begin{itemize}[noitemsep,topsep=0pt]
    \item If $k_i=0$, then $N_i = N^*_i = \{p\}$.
    Move $p$ from $N_i$ to $N_0$ and from $N^*_i$ to $N^*_0$.
    
    $\Phi$ is unaffected in this case because for any $s$, $n_i(s)$ changes by the same amount as $n^*_i(s)$, and $n_0(s)$ changes by the same amount as $n^*_0(s)$.
    
    All the cases below assume that $k_i>0$.

    \item If $|N_i| = k_i + 1$ but $|N^*_i| = k_i$, move $q_i$ from $N_i$ to $N_0$. 
    
    We show that when $q_i$ is removed from $N_i$, $\phi_i$ decreases by 1. Since $N_i$ had more pages than $N^*_i$, before this step $\phi_i \geq 1$. Also before this step, $\delta_i(s)\leq 0$ for $s>rank(q_i)$ (since $n_i(s)=0$ for those $s$), so the maximum was not achieved for those values of $s$. And for $s\leq rank(q_i)$, $\delta_i(s)$ decreases by 1 after this step, leading to the decrease of $\phi_i$. Now, when $q_i$ is added to $N_0$, $\phi_0$ increases by at most 1. But this is compensated by the decrease in $\phi_i$, showing that overall $\Phi$ doesn't increase.

    \item If $|N_i| = k_i$ but $|N^*_i| = k_i+1$, move the second-lowest-ranked page $q \in N^*_i$ to $N^*_0$. Note that by our assumption that $k_i>0$, $N^*_i$ has at least two pages.
    
    Adding a page to $N^*_0$ can only decrease the potential. Now we consider the effect on $\phi_i$ of removing $q$ from $N^*_i$. We show that for any $s$ for which $\delta_i(s)$ could have changed, it was negative before this step.
    For any $s>rank(q)$, $\delta_i(s)$ doesn't change. 
    Note that page $p$ has minimum rank in both $N_i$ and $N^*_i$. So, before this step, for $s\leq rank(p)$, $n^*_i(s) = |N^*_i| = k_i+1$ and $n_i(s) = |N_i| = k_i$, so $\delta_i(s)<0$.
    For $s\in (rank(p), rank(q)]$, $n^*_i(s)=k_i$ and $n_i(s)\leq k_i-1$, so again $\delta_i(s)<0$. Thus when $\delta_i(s)$ for $s\leq rank(q)$ increases by 1, it remains at most 0, and does not increase $\Phi$ (which is always at least 0).
    
    \item Recall that $q_i \in N_i$ and $q^*_i \in N^*_i$ are the pages with maximum ranks in the respective sets.
    If $|N_i| = |N^*_i| = k_i+1$ and $rank(q_i) \leq rank(q^*_i)$, move $q_i$ from $N_i$ to $N_0$ and $q^*_i$ from $N^*_i$  to $N^*_0$.
    
    We first consider the removal of $q_i$ from $N_i$ and of $q^*_i$ from $N^*_i$.
    For $s\leq rank(q_i)$, both $n_i(s)$ and $n^*_i(s)$ decrease by 1, so $\delta_i(s)$ doesn't change. 
    For $s>rank(q^*_i)$, $n_i(s)$, $n^*_i(s)$, and $\delta_i(s)$ are unchanged.
    For $s\in (rank(q_i), rank(q^*_i)]$, before this step we had $n_i(s)=0$ and $n_i^*(s)\geq 1$, with $\delta_i(s)\leq -1$. So increasing $\delta_i(s)$ by 1 for these $s$ does not change $\Phi$.
    Now we consider the addition of $q_i$ to $N_0$ and of $q^*_i$ to $N^*_0$. For any $s$, $n^*_0(s)$ increases at least as much as $n_0(s)$ does, so $\Phi$ does not increase.
    
    \item If $|N_i| = |N^*_i| = k_i+1$ and $rank(q^*_i) < rank(q_i)$, move $q_i$ to $N_0$ and the second-lowest-ranked page in $N^*_i$ (call it $q$) to $N^*_0$. Note again that $N^*_i$ has at least two pages. 

    In this case $\phi_0$ may increase by 1, but we show that this is offset by a decrease in $\phi_i$. We analyze what happens for values of $s$ in the intervals separated by three values: $rank(p)<rank(q)<rank(q_i)$ (see Figure \ref{fig:offline}). 
    Before this step, $\delta_i(rank(q_i)) = n_i(rank(q_i)) - n^*_i(rank(q_i)) = 1-0=1$, so $\phi_i \geq 1$. Page $p$ is the page with minimum rank in both $N_i$ and $N^*_i$. For $s\leq rank(p)$, before the step $\delta_i(s)=0$, and it stays 0 after the step. For $s\in (rank(p), rank(q)]$, before the step $n_i^*(s)=|N^*_i|-1=k_i$ and $n_i(s) \leq |N_i|-1=k_i$, so $\delta_i(s)\leq 0$, and it stays that way. For $s>rank(q_i)$, also $\delta_i(s) = 0$ and stays 0.
    Thus, the maximum $\delta_i(s)$ was achieved for some $s\in (rank(q), rank(q_i))$. But in this interval, $n_i(s)$ decreases by 1, while $n^*_i(s)$ stays the same. Thus, the maximum $\delta_i(s)$ decreases by 1, causing $\phi_i$ to also decrease.
\end{itemize}

\paragraph*{Step 4 ($\OPT$ moves):}  If $p$ was in cache, then the optimal algorithm doesn't do anything. Note that in this case, based on previous rearrangements, $|N^*_0|=k_0$. Neither $OPT$ nor $\Phi$ changes.
If $p$ was not in cache, the optimal algorithm fetches $p$ and evicts some page, say $q \in N^*_j$. Then $\Delta(\OPT) = 1$. Also note that in this case the previous steps added $p$ to $\bigcup_\ell N^*_\ell$, resulting in $|N^*_0|=k_0+1$. 
If $j=0$, delete $q$ from $N^*_0$. This restores $|N^*_0|=k_0$ and increases $\Phi$ by at most 1.
If $j\neq 0$, then there must be some $q'\in N^*_0$ belonging to agent $j$ (otherwise it would mean that agent $j$ had only $k_j$ pages in cache, and the optimal algorithm violated reserve sizes by evicting agent $j$'s page).
Move $q'$ from $N^*_0$ to $N^*_j$ and delete $q$ from $N^*_j$. This increases $\Phi$ by at most 2, satisfying the desired inequality.

\paragraph*{Step 5 ($\ALG$ moves):}
If $p$ was in cache, then do nothing. Otherwise, fetch $p$ and evict the page $q$ with maximum rank in $N_0$, also deleting it from $N_0$. In this case, $\Delta(\ALG) = 1$. We show that this is compensated by $\Delta(\Phi)=-1$. Before this step, we had $|N_0|=k_0+1$ but $|N^*_0|=k_0$, so $\phi_0\geq 1$. For $s>rank(q)$, we had $\delta_0(s)\leq 0$, and this doesn't change. So the maximum must have been achieved for $s\leq rank(q)$, and $\delta_0(s)$ for those $s$  decreases by 1.

\paragraph*{Step 6 (Update the rank of $p$):}
At this point, if $k_i = 0$, then $p\in N_0 \cap N^*_0$; otherwise, $p \in N_i \cap N^*_i$.
In either case, changing $rank(p)$ preserves $\delta_0(s)$ and $\delta_i(s)$ for all $s$, so $\Phi$ is unchanged.
\end{proof}

\noindent
This completes the proof of Lemma \ref{lem:deltas} and the proof of Theorem \ref{thm:offline}.

\section{Online Caching with Reserves}
\label{sec:online}

In this section, we design an $O(\log k)$-competitive fractional online algorithm for caching with reserves. In particular, we prove Theorem \ref{thm:fractional}, which is restated here for convenience.
In Section \ref{sec:rounding}, we show that any fractional algorithm for online caching with reserves can be \emph{rounded} to obtain a randomized integral algorithm by losing only a constant factor in the competitive ratio. We remark that our rounding algorithm does not necessarily run in polynomial time.

\fractionalthm*

We begin with the fractional algorithm, which is based on the primal-dual framework and closely follows the analysis of \cite{bansal2012primal}. 
As page requests arrive, the algorithm maintains a feasible solution to the primal LP, which corresponds to its eviction decisions, and an approximately feasible solution to the dual LP. The costs of these two solutions are within a factor 2 of each other. Using LP duality, this results in a bound on the cost incurred by the algorithm compared to the optimum.

\subsection{Notation} 
Consider some fixed page $p \in \U$, and let $t_{p,1} < t_{p,2} < ...$ be the time steps when page $p$ is requested in the online sequence. For any $a \geq 0$, define $I(p,a) = \{t_{p,a} + 1, \ldots, t_{p,a+1}-1\}$ to be the time interval between the $a$th and $(a+1)$th requests for page $p$ (assume that $t_{p,0} = 0$ for all pages). Let $a(p,t)$ be the number of requests to page $p$ that have been seen until time $t$ (inclusive). Hence, by definition, for any time $t$, and any page $p \in \U \setminus \{p_t\}$, we have $t \in I(p, a(p,t))$.
At any time $t$, an agent $i \in \I$ is said to be \emph{\fettered} if exactly $k_i$ pages of agent $i$ are held in cache. Let $\T$ denote the set of \fettered\ agents.\footnote{The set of \fettered\ agents varies with the time $t$, but we suppress the dependence on $t$ for convenience.}

\subsection{Formulation}
We use the variable $x(p, a) \in \{0,1\}$ to denote whether page $p$ is evicted between its $a$th and $(a+1)$th request, i.e., in the interval $I(p,a)$ (where 1 denotes an eviction). We have the following linear programming relaxation and its dual formulation. 
\medskip

\noindent
\hspace{-1em}
\begin{minipage}[t]{0.5\textwidth}
\footnotesize
\begin{center}
    \textbf{Primal LP}
\end{center}
\begin{align}
    & \min \quad \sum_{p \in \U} \sum_{\substack{a \geq 1 : \\ t_{p,a} \leq T}} x(p, a) & \notag \\
    & \text{subject to:} \notag \\
    & \quad \sum_{p \in \U, p \neq p_t} x(p, a(p,t)) \geq n - k & \forall t \label{eq:primalcachesize} \\
    & \sum_{p \in \U(i), p \neq p_t} x(p, a(p,t)) \leq n_i - k_i & \forall t, \forall i \label{eq:primalreserve} \\
    & \hspace{7em} x(p,a) \leq 1 & \forall p, \forall a \label{eq:primalvarbound} \\
    & \hspace{9em}\  x \geq 0
\end{align}
\end{minipage}
\quad\vline\hspace{-1em}
\begin{minipage}[t]{0.5\textwidth}
\footnotesize
\begin{center}
    \textbf{Dual LP}
\end{center}
\begin{align}
&\max \sum_t (n-k) \alpha(t) - \sum_{t,i} (n_i - k_i) \beta(t,i) \notag\\
&\hspace{9em} - \sum_{p,a} \gamma(p,a) \notag \\
&\text{subject to:} \notag \\
&\sum_{t \in I(p,a)} \bigl(\alpha(t) - \beta(t, ag(p)) \bigr) - \gamma(p,a) \notag\\
&\hspace{12em}\leq 1 \quad \forall p, \forall a \label{eq:dualconstraint} \\
& \hspace{9em}\ \alpha,\beta, \gamma \geq 0 
\end{align}
\end{minipage}

\medskip
\medskip

The primal objective simply measures the total number of evictions. The first constraint enforces that at any time $t$ at least $n-k$ pages apart from $p_t$ are outside the cache, which implies that at most $k$ pages (including $p_t$) are inside the cache. The second constraint enforces that at any time, at most $(n_i - k_i)$ pages of agent $i$ are outside cache (which implies that at least $k_i$ pages are inside the cache). Note that this is true even if $p_t \in \U(i)$, since then we know that $p_t$ must be in cache, so of the remaining $n_i - 1$ pages, at least $k_i - 1$ must be in the cache, so the total amount outside cache must be at most $(n_i - 1) - (k_i - 1) = n_i - k_i$. 

\subsection{Algorithm}

For convenience, we assume without loss of generality that the cache is initialized to an arbitrary feasible configuration, i.e., each agent $i$ has some arbitrary $k_i$ pages in the cache, and the rest of the cache has $k_0$ other arbitrary pages.
At each time step, as a new page request arrives online, a new set of constraints for the primal LP are revealed, along with the corresponding new variables in the dual. All newly introduced variables are initialized to zero. Note that after the arrival of a new page request at time $t$, only the primal constraint~\eqref{eq:primalcachesize} may now be unsatisfied; however, \eqref{eq:primalreserve}~and~\eqref{eq:primalvarbound} remain feasible. So to maintain a feasible primal solution, we modify the primal (and dual) variables until Constraint~\eqref{eq:primalcachesize} is satisfied. The online algorithm is required to maintain that all the primal variables $x(p,a)$ only monotonically increase over time.
We remark that the dual solution that we maintain will always be approximately feasibile. The violation in \eqref{eq:dualconstraint} is at most $O(\log k)$ at all times (Claim~\ref{claim:dualfeasible}).

\begin{algorithm}
\caption{Fractional Online Algorithm for  Caching with Reserves}
\label{alg:onlinepd}
\DontPrintSemicolon
Let $\eta \leftarrow \frac{1}{k}$\;
\ForEach{request for page $p$ at time $t$}{
Initialize $x(p, a(p,t)) \leftarrow 0, \alpha(t) \leftarrow 0$, $\gamma(p, a(p,t)) \leftarrow 0$ and $\forall i \in \I, \beta(t, i) \leftarrow 0$\;
\While{primal constraint \eqref{eq:primalcachesize} is unsatisfied}{
    Increase dual variable $\alpha(t)$ by $d\alpha$\;
    \ForEach{\fettered\ agent $i \in \T$}{
    Increase dual variable $\beta(t, i)$ by $d\alpha$\;
    }
    \ForEach{page $q \in \U$} {
    \uIf{$ag(q) \in \T$}{
        Do nothing
    }
    \uElseIf{$x(q, r(q,t)) = 1$}{
    Increase $\gamma(q, r(q,t))$ by $d\alpha$\;
    }
    \Else{
    Increase $x(q, r(q,t))$ by $dx = (x(q, r(q,t)) + \eta) d\alpha$\;
    }
    }
}
}
\end{algorithm}

\subsection{Analysis}

First, we note that the primal solution that we construct is feasible by design.

\begin{claim} 
\label{claim:primalcost}
At all times $t$, we maintain the inequality: Primal Objective $\leq$ $2 \, \cdot$ Dual Objective.
\end{claim}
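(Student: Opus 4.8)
The plan is to prove Claim~\ref{claim:primalcost} by tracking how the primal and dual objectives change during a single execution of the \textbf{while} loop in Algorithm~\ref{alg:onlinepd}, i.e., as we raise $\alpha(t)$ by an infinitesimal $d\alpha$. It suffices to show that at every such infinitesimal step, the rate of increase of the primal objective is at most twice the rate of increase of the dual objective; since both objectives start at $0$ and the inequality is maintained throughout, it holds at all times. So I would fix a time $t$ with the while loop running, and compute $\frac{d(\text{Primal})}{d\alpha}$ and $\frac{d(\text{Dual})}{d\alpha}$.

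First I would compute the primal rate. Only the variables $x(q, r(q,t))$ for pages $q$ with $ag(q)\notin\T$ and $x(q,r(q,t))<1$ are being raised, each at rate $dx/d\alpha = x(q,r(q,t)) + \eta$. Summing over all such $q$, the primal objective increases at rate $\sum_{q : ag(q)\notin\T,\, x(q,\cdot)<1} \bigl(x(q,r(q,t)) + \eta\bigr)$. The key observation is that, because the while loop is still running, primal constraint~\eqref{eq:primalcachesize} is violated, i.e., $\sum_{q\neq p_t} x(q,a(q,t)) < n-k$; I would use this, together with the fact that the terms for fettered agents and for saturated pages only help, to bound $\sum_{q : ag(q)\notin\T,\, x(q,\cdot)<1} x(q,r(q,t)) \le n-k$ (roughly — one has to be careful about the $p_t$ term and about pages at value $1$). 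The $\eta$ terms contribute at most $\eta\cdot(\text{number of active pages}) \le \eta n$, but more usefully I would want to bound the count of active pages by something like $k$ or $n$; the cleaner route is to note there are at most $n$ such pages so the $\eta n$ term is $O(n/k)$, and fold it in. This should give primal rate $\le (n-k) + (\text{small})$, or with the right accounting exactly bounded by what the dual contributes.

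Next I would compute the dual rate. The dual objective is $\sum_t (n-k)\alpha(t) - \sum_{t,i}(n_i-k_i)\beta(t,i) - \sum_{p,a}\gamma(p,a)$. During this step, $\alpha(t)$ rises at rate $1$, each $\beta(t,i)$ for $i\in\T$ rises at rate $1$, and each $\gamma(q,r(q,t))$ for saturated non-fettered pages $q$ rises at rate $1$. So the dual rate is $(n-k) - \sum_{i\in\T}(n_i-k_i) - |\{q : ag(q)\notin\T,\ x(q,r(q,t))=1\}|$. Now I would argue that for a fettered agent $i$, all of its $n_i - k_i$ currently-out-of-cache pages have $x = 1$ (fractionally: the total eviction mass of agent $i$'s pages other than $p_t$ equals $n_i - k_i$, and constraint~\eqref{eq:primalreserve} stays tight — this is exactly why the algorithm freezes fettered agents), so $\sum_{i\in\T}(n_i - k_i)$ counts pages outside the cache belonging to fettered agents, all at value $1$; combined with the saturated non-fettered pages, the subtracted terms together count (nearly) all pages at eviction-value $1$. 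Matching this against the primal side — where active pages are exactly the non-fettered, non-saturated ones — I expect the arithmetic to collapse to primal rate $\le 2\cdot$ dual rate.

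The main obstacle I anticipate is the bookkeeping around the boundary cases: the requested page $p_t$ itself (which is being fetched and should be treated as ``in cache'', hence excluded from the sums), pages that sit exactly at $x=1$ versus strictly inside $(0,1)$, and ensuring the invariant that fettered agents' out-of-cache pages are all fully evicted (which presumably is a separate claim maintained inductively, or follows because $\beta(t,i)$ is only ever raised in lockstep with $\alpha(t)$ precisely when agent $i$ is fettered). I would likely need to first establish, as a lemma or within the proof, that constraint~\eqref{eq:primalreserve} is always tight for fettered agents and slack otherwise, and that this is why the $\beta$ update rule is well-defined; with that in hand, the primal-vs-dual rate comparison is a direct calculation analogous to Bansal--Buchbinder--Naor's weighted paging analysis, adapted to subtract off the reserve contributions.
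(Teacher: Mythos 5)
Your overall strategy is the same as the paper's: compare the rates of change of the primal and dual objectives during each infinitesimal raise of $\alpha(t)$, partition the pages into those of \fettered{} agents, fully evicted pages, and the remaining ``active'' set $S$ whose $x$-variables are being raised, and use the violation of constraint~\eqref{eq:primalcachesize} together with the tightness of~\eqref{eq:primalreserve} for \fettered{} agents. Your identification of the dual rate as $(n-k) - \sum_{i\in\T}(n_i-k_i) - |E|$ and your plan to cancel the fettered and saturated contributions against the primal sum are both exactly what the paper does.

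However, there is a genuine gap in how you handle the $\eta$ term, and it is precisely the step that produces the factor $2$. Bounding the contribution of the $\eta$ terms by $\eta \cdot n = n/k$ and ``folding it in'' does not work: the dual rate can be as small as $1$ (when $|S| = k'$ where $k' := k - \sum_{i\in\T} k_i$), while $n/k$ is unbounded in terms of the dual rate, so this additive slack cannot be absorbed into a multiplicative factor of $2$. The correct argument bounds the $\eta$ contribution by $|S|\eta \le |S|/k'$ (using $\eta = 1/k \le 1/k'$) and then needs the inequality $|S| \ge k'$, which guarantees $|S|/k' \le |S| - k' + 1$, i.e., the $\eta$ contribution is at most the dual rate itself. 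The bound $|S| \ge k'$ is not free: it is derived from the fact that constraint~\eqref{eq:primalcachesize} is strictly violated while~\eqref{eq:primalreserve} is tight for every $i\in\T$, which gives $\sum_{q\in S} x(q,\cdot) + |E| + \sum_{i\in\T}(n_i-k_i) < n-k$ and hence $|S| > k'-1$. Without isolating and proving this lower bound on the number of actively raised pages, the ``arithmetic collapse'' you anticipate does not go through. Your other anticipated bookkeeping issues (the exclusion of $p_t$, which contributes the $+1$ in the dual rate $|S|-k'+1$, and the tightness of~\eqref{eq:primalreserve} for \fettered{} agents) are handled as you expect.
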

\begin{proof} 
At time $t = 0$, both the primal and dual solutions are initialized to have an objective of zero. Since the algorithm increases the primal and dual variables in a continuous fashion, consider any infinitesimal time step and let $\Delta P$ and $\Delta D$ denote the change in the primal and dual objectives in this step respectively. It suffices to  show that $\Delta P \leq 2 \cdot \Delta D$ holds at all times.

Let $\T$ denote the set of agents who are \fettered\  during this step. Also partition the set $\U \setminus \{p\}$ into three parts: $T$ is the set of pages belonging to \fettered\  agents, $E = \{q \in \U \setminus T \mid x(q, r(q,t)) = 1\}$ is the set of pages of non-tight agents that have been fully evicted, and $S$ is the remaining set of pages. So we have $|T| + |S| + |E| = n-1$, and $|T| = \sum_{i \in \T} n_i$. We also define $k' := k - \sum_{i \in \T} k_i$.

The change in the dual objective is given by:
\begin{align*}
    \Delta D &= (n-k) d\alpha - \sum_{i \in \T}(n_i - k_i) d\alpha - |E| d\alpha
    = \Bigl(n - k - |T| + \sum_{i \in \T} k_i - |E|\Bigr) d\alpha\\
    &= \Bigl(|S| - \bigl(k - \sum_{i \in \T} k_i \bigr) + 1\Bigr) d\alpha = (|S| - k' + 1) d\alpha
    \intertext{On the other hand, the change in primal objective is given by:}
    \Delta P &= \sum_{q \in S} \bigl(x(q, r(q,t)) + \eta\bigr) d\alpha\\
    &= \Bigl( \sum_{q \in \U \setminus \{p\}} x(q, r(q,t)) - \sum_{q \in T} x(q, r(q,t)) - \sum_{q \in E} x(q, r(q,t)) + |S| \eta \Bigr) d\alpha
    \intertext{Since the variables are updated only as long as constraint~\eqref{eq:primalcachesize} is not satisfied, we can bound the first term in the above expression by $n-k$. All pages in $T$ belong to \fettered\  agents, so we have $\sum_{q \in T} x(q, r(q,t)) = \sum_{i \in \mathcal{T}} (n_i - k_i)$. Lastly, all pages in $E$ have $x(q, r(q,t)) = 1$. So we get:}
    \Delta P & \leq \Bigl(n - k - \sum_{i \in \T} (n_i - k_i) - |E| + |S| \eta \Bigr) d\alpha
    = \Bigl(|S| - \bigl(k - \sum_{i \in \mathcal{F}}k_i\bigr) + 1 + |S|\eta\Bigr) d\alpha \\
    &\leq \bigl(|S| - k' + 1 + |S|/k' \bigr) d\alpha \tag{since $\eta = 1/k \leq 1/k'$} \\
    & \leq 2 (|S| - k' + 1) d\alpha = 2 \cdot \Delta D
\end{align*}
It remains to justify the final inequality, which is equivalent to showing that $|S| \geq k'$. By definition, we have $|S| = n-1 - |E| - |T|$. Since  \eqref{eq:primalcachesize} is violated and \eqref{eq:primalreserve} is tight for $i \in \T$, the following strict inequality holds: 
\[
\sum_{q \in S} x(q, r(q,t)) + |E| + \sum_{i \in \T} (n_i - k_i) = \sum_{q \in S \cup T \cup E} x(q, r(q,t)) < n - k.
\]
Combining the above, we get $|S| > k' - 1$, which implies that $|S| \geq k'$.
\end{proof}

\begin{claim}
\label{claim:dualfeasible}
The dual solution maintained by the algorithm is $O(\log k)$-approximately feasible.
\end{claim}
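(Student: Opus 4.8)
## Proof Proposal for Claim~\ref{claim:dualfeasible}

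The plan is to show that the only dual constraint that can be violated is~\eqref{eq:dualconstraint}, and to bound the left-hand side of that constraint for any fixed pair $(p,a)$ by $O(\log k)$. First I would fix a page $p$ and an index $a$, and focus on the interval $I(p,a)$ between consecutive requests for $p$. The key observation is that during this interval, whenever $ag(p)$ is tight at some time $t \in I(p,a)$, the algorithm increments $\beta(t, ag(p))$ by exactly $d\alpha$, so the contribution $\alpha(t) - \beta(t,ag(p))$ to the sum is zero at those times; and whenever $ag(p)$ is not tight at time $t$, the algorithm is actively increasing $x(p, a(p,t)) = x(p,a)$ according to the multiplicative rule $dx = (x(p,a) + \eta)\, d\alpha$. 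So the net effect is that the "useful" portion of $\sum_{t \in I(p,a)} (\alpha(t) - \beta(t,ag(p)))$ is exactly accumulated at the same rate that drives the growth of $x(p,a)$.

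Next I would make this precise. Let $f$ denote the value of $x(p,a)$. Over the (infinitesimal) steps inside $I(p,a)$ where $ag(p)$ is not tight, $df = (f + \eta)\, d\alpha$, which means $d\alpha = \frac{df}{f + \eta}$. If $\gamma(p,a)$ has not yet been activated (i.e.\ $x(p,a) < 1$), then each such step contributes exactly $d\alpha$ to the left-hand side of~\eqref{eq:dualconstraint}, and the steps where $ag(p)$ is tight contribute zero (because $\alpha(t) - \beta(t, ag(p)) = 0$). Once $x(p,a)$ reaches $1$, the algorithm switches to incrementing $\gamma(p,a)$ instead, which exactly cancels the further $\alpha(t)$ increments, so the left-hand side stays fixed from that point on. Therefore the total accumulation is bounded by
\[
\sum_{t \in I(p,a)} \bigl(\alpha(t) - \beta(t, ag(p))\bigr) - \gamma(p,a) \;\le\; \int_{0}^{1} \frac{df}{f + \eta} \;=\; \ln\!\Bigl(\frac{1 + \eta}{\eta}\Bigr) \;=\; \ln(1 + k) \;=\; O(\log k),
\]
using $\eta = 1/k$. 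This establishes that~\eqref{eq:dualconstraint} is violated by at most a factor $O(\log k)$; dividing the dual solution by this factor yields a genuinely feasible dual solution whose objective is at least a $1/O(\log k)$ fraction of the one we maintain.

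The main obstacle — really a bookkeeping subtlety rather than a deep difficulty — is handling the interleaving of tight and non-tight periods for $ag(p)$ within a single interval $I(p,a)$, and making sure the $\beta$ subtraction exactly neutralizes the $\alpha$ increments during tight periods rather than merely bounding them. A second point requiring care is that the variable $x(p,a)$ might already be strictly positive at the start of the interval $I(p,a)$ (it could have been raised during an earlier interval and carried over, or initialized by the initial cache configuration); I would argue that this only helps, since the integral $\int_{f_0}^{1} df/(f+\eta)$ with a larger lower limit $f_0 \ge 0$ is only smaller, so the bound $\ln(1+k)$ remains valid. Finally, I should note that the dual variables $\alpha, \beta, \gamma$ are all nonnegative by construction (the algorithm only ever increases them from $0$), so the sign constraints of the dual are automatically satisfied and~\eqref{eq:dualconstraint} is indeed the only constraint at issue.
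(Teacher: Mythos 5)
Your proposal is correct and follows essentially the same route as the paper: couple each increment of $\alpha(t)$ during $I(p,a)$ to the three cases (agent tight so $\beta$ cancels it, $x(p,a)=1$ so $\gamma$ cancels it, or $x(p,a)$ grows multiplicatively), then integrate $d\alpha = dx/(x(p,a)+\eta)$ from $0$ to at most $1$ to get the $\ln(k+1)$ bound. The only minor remark is that your worry about $x(p,a)$ being positive at the start of $I(p,a)$ is moot, since the algorithm initializes $x(p,a)$ to $0$ at the $a$th request of $p$; as you note, a larger starting value would only improve the bound anyway.
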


\begin{proof}
Consider any page $p$ and interval $I(p,a) = \{t_{p,a} + 1, \ldots, t_{p,a+1}-1\}$. We show that the following inequality holds at all times:
\[
\sum_{t \in I(p,a)}(\alpha(t) - \beta(t, ag(p))) - \gamma(p,a) \leq \ln (k+1),
\]
which implies dual feasibility of the solution $(\alpha,\beta,\gamma)$ scaled down by a factor $\ln(k+1)$.

We analyze the changes that occur in the LHS of the above inequality. We interpret the set $I(p,a)$ in an online fashion:  time $t \in \{t_{p,a}+1,\ldots,t_{p,a+1}-1\}$ is included in $I(p,a)$ at the start of the timestep $t$. Note that $x(p, a) = 0$ and the LHS is $0$ at the start of time $t_{p,a}+1$. Over time, as page-requests $p_t (\neq p)$ arrive during times $t \in \{t_{p,a}+1, \ldots, t_{p,a+1}-1\}$, the LHS increases whenever the $\alpha(t)$ variable increases, but there is no corresponding increase in the $\beta(t,ag(p))$ or $\gamma(p,a)$ variables. We couple such increases to increases in the primal variable $x(p,a)$.
Note that $x(p,a)$ gets capped at $1$, and after that $\gamma(p,a)$ is coupled with $\alpha(t)$.

At any infinitesimal step, if some $\alpha(t)$ increases by $d\alpha$, then we have one of three cases. \emph{Case 1:} Agent $ag(p)$ is \fettered\  and $\beta(t, ag(p))$ increases by $d\alpha$; \emph{Case 2:} $x(p, a) = 1$ and $\gamma(p,a)$ increases by $d\alpha$; \emph{Case 3:} $x(p, a)$ increases by $dx = (x(p,a) + \eta)d\alpha$. In the first two cases, the LHS does not change at all, while in the second case, the LHS changes by $d\alpha$. So overall we have 
\begin{align*}
    d(\mathrm{LHS}) &= \left(\frac{1}{x(p,a) + \eta}\right) dx(p,a)
    \intertext{A straightforward integration gives:}
    LHS &= \int_{0}^{X} \left(\frac{1}{x(p,a) + \eta}\right) dx(p,a) \tag{where $X$ is the final value of $x(p,a)$}\\
    &\leq \int_{0}^1 \left(\frac{1}{x(p,a) + \eta}\right) dx(p,a)\\
    &= \left[\ln(x(p,a) + \eta)\right]_0^1 = \ln\Bigl(\frac{1 + \eta}{\eta}\Bigr) = \ln(k+1)
    \qedhere
\end{align*}
\end{proof}

\begin{proof}[Proof of Theorem \ref{thm:fractional}]
The proof follows directly from the two claims above. Let $(x, \alpha, \beta, \gamma)$ denote the primal and dual variables constructed by Algorithm~\ref{alg:onlinepd}, and $(x^*, \alpha^*, \beta^*, \gamma^*)$ be the corresponding variables in the optimal solutions. Using LP duality for the last step, we have:
\begin{align*}
    \sum_{p \in \U} \sum_{\substack{a \geq 1 : \\ t_{p,a} \leq T}}  x(p, a) & \leq 2 \Bigl(\sum_t (n-k) \alpha(t) - \sum_{t,i} (n_i - k_i) \beta(t,i) - \sum_{p,a} \gamma(p,a)\Bigr) \tag{by Claim~\ref{claim:primalcost}} \\
    &\leq 2 \ln(k+1) \Bigl(\sum_t (n-k) \alpha^*(t) - \sum_{t,i} (n_i - k_i) \beta^*(t,i) - \sum_{p,a} \gamma^*(p,a)\Bigr) \tag{by Claim~\ref{claim:dualfeasible}} \\
    &\leq 2 \ln(k+1) \Bigl(\sum_{p \in \U} \sum_{\substack{a \geq 1 : \\ t_{p,a} \leq T}} x^*(p, a)\Bigr)
    \qedhere
\end{align*}
\end{proof}

\section{Rounding} \label{sec:rounding}

We now describe an $O(1)$-approximate rounding scheme for the fractional algorithm of Section \ref{sec:online}, thus proving Theorem \ref{thm:integral}. 

\integralthm*

\begin{proof}
For any time $t = 1,2,\ldots$, the randomized integral algorithm will maintain a distribution $\mu^t$ of cache states 
such that for any page $p$, the probability that page $p$ is not in the cache (of the randomized algorithm) at time $t$ is exactly $x^t(p,r(p,t))$, where $x^t$ denotes $x$ at time $t$.
By the design of our primal-dual algorithm, the $x$-variables never decrease, so the cost incurred by the fractional algorithm to serve page $p_t$ is given by:
\[
\mathrm{cost}(t) := \sum_{p \in \U, p \neq p_t}  \Bigl(x^{t+1}(p, r(p,t)) - x^t(p, r(p,t)) \Bigr).
\]
We will shortly describe how the integral algorithm moves from the distribution $\mu^t$ to $\mu^{t+1}$ while ensuring that the expected number of fetches and evictions is at most $O(\mathrm{cost}(t))$.
We remark that our rounding algorithm does not necessarily run in polynomial time. This is because the support size of $\mu^t$ can be super-polynomial in $|\U|$ and $k$. This is not an issue for online algorithms, so we simply assume that we are maintaining a probability distribution over $O(\binom{|\U|}{k})$ cache states. 

Fix some time $t$. For each page $p \in \U \setminus \{p_t\}$, define $y(p) := 1 -  x^t(p, r(p,t))$ and $y'(p) := 1 - x^{t+1}(p, r(p,t))$ to be the portion of page $p$ that is in the cache at the start of times $t$ and $t+1$, respectively. Also define $y(p_t) = 1 - x^t(p_t, r(p_t,t))$ and $y'(p_t) := 1$; note that the fractional algorithm pays cost $1 - y(p_t)$ to fully fetch $p_t$ into the cache by the end of timestep $t$.
With the above notation, for any page $p \in \U$, we have $\Pr_{C \sim \mu^t}[p \in C] = y(p)$ and $\Pr_{C \sim \mu^{t+1}}[p \in C] = y'(p)$.

To simplify the description of our rounding scheme, we further assume that the changes that occur in the primal solution between states $x^t$ and $x^{t+1}$ do so through a sequence of smaller changes where the $x$-value changes for exactly two pages.
Let $p,q \in \U$ and $\epsilon \in [0,1]$ be such that $y'(p) = y(p) + \epsilon$, $y'(q) = y(q) - \epsilon$, and $y'(p') = y(p')$ for all $p' \in \U \setminus \{p,q\}$.\footnote{Here, $p$ plays the role of page $p_t$ that is fetched into the cache, and $q$ plays the role of pages in $\U \setminus \{p_t\}$ that are evicted to make space for $p_t$.}
Let $\mu, \mu'$ denote distributions over integral cache states that agree with $y$ and $y'$, respectively. 
The cost incurred by the fractional algorithm to move from $y$ to $y'$ is exactly $\epsilon$ (because it only pays for evictions). We now describe how the integral algorithm moves from $\mu$ to $\mu'$ by incurring a cost of at most $4 \epsilon$. To modify a $\delta$ probability measure of the cache-state from $C$ to $C'$, the integral algorithm pays a cost of $\delta \cdot |C \setminus C'|$.
We divide the modification steps into three phases:
\begin{enumerate}
    \item \textbf{Fixing the marginals:} In this phase, we modify the distribution $\mu$ so that for any page $p' \in \U$, $\Pr_{C \sim \mu}[p' \in C]$ changes from $y(p')$ to $y'(p')$. We accomplish this by: (i) adding $p$ to an $\epsilon$ probability measure of cache states from $\mu$ that do not contain $p$; and (ii) removing $q$ from an $\epsilon$ measure of cache states from $\mu$ that contain $q$. The cost incurred in this step is exactly $\epsilon$. 
    
    By the end of this phase, for any (possibly infeasible) cache state $C$ in $\mu$, we have $|C| \in \{k-1,k,k+1\}$. Furthermore, if such a $C$ violates some reserve constraint, then it must have been obtained by removing page $q$ from some other cache state, and so we have $|C| \in \{k-1,k\}$. Let $\epsilon_1 \in [0,1]$ denote the probability measure of cache states with exactly $k-1$ pages. By the description of the modification step, it is clear that $\epsilon_1 \leq \epsilon$ and \emph{exactly} $\epsilon_1$ measure of cache states have cardinality $k+1$. Let $\epsilon_2 \in [0,1]$ denote the measure of cache states that violate some reserve requirement. It is clear that $\epsilon_2 \leq \epsilon$.
    
    \item \textbf{Fixing the size:} In this phase, we match an $\epsilon_1$ measure of cache-states of size $k-1$ with an $\epsilon_1$ measure of cache-states of size $k+1$. Let $C$ and $C'$ denote page-sets of size $k-1$ and $k+1$, respectively, that are matched with some positive measure $\alpha$. Since $|C'| = k+1$, none of the reserve constraints are violated in $C'$ i.e., for all agents $i \in \I$, we have $|C' \cap \U(i)| \geq k_i$. Pick an arbitrary page $p' \in C' \setminus C$. We remove $p'$ from an $\alpha$ measure of state $C'$, and add it to an $\alpha$ measure of state $C$. The cost incurred in this phase is exactly $\epsilon_1 \leq \epsilon$.
    
    By the end of this phase, all cache-states have cardinality exactly $k$. Let $\epsilon_3 \in [0,1]$ denote the measure of cache states that satisfied all reserve constraints at the end of the first phase, but now violate some reserve constraint. By the above discussion, such cache states arise from the removal of page $p' \in C' \setminus C$ from $C'$ (that had size $k+1$), so $\epsilon_3 \leq \epsilon_1$. Overall, exactly $\epsilon_2 + \epsilon_3$ measure of cache states violate some reserve constraint. In fact, every violated cache state violates a single reserve constraint.
    
    \item \textbf{Fixing the violated reserve constraint:} We now fix all violated reserve constraints by matching an $\epsilon_2 + \epsilon_3$ measure of cache states with exactly an $\epsilon_2 + \epsilon_3$ measure of cache states that have an excess in that reserve constraint. More precisely, if $C$ is a cache state that violates the reserve constraint for agent $i \in \I$, then we match an $\alpha > 0$ measure of $C$ with another cache state $C'$ that satisfies $|C' \cap \U(i)| \geq k_i + 1$. Such a matching exists because the fractional solution $y'$ satisfies all reserve constraints and (by the end of the first phase we ensured that) the distribution $\mu$ satisfies the reserve constraint in expectation: for every cache state $C$ with $|C \cap \U(i)| < k$, there must exist another cache state $C'$ with $|C' \cap \U(i)| > k$. We move an arbitrary page $p' \in \U(i) \cap (C' \setminus C)$ from $C'$ to $C$. The cost incurred in this phase is at most $\epsilon_2 + \epsilon_3 \leq 2 \epsilon$. 
    
    At the end of this step, all cache states have size exactly $k$ and satisfy all reserve constraints. The marginal probabilities in the resulting distribution $\mu'$ matches $y'$.
\end{enumerate}

This completes the description of our rounding scheme.
\end{proof}

\bibliographystyle{plainurl}
\bibliography{references_arxiv}
\appendix

\section{NP-hardness of Offline Problems}\label{apx:np-hardness}

We prove Theorem~\ref{thm:lower-bound}, restated here for convenience:

\lowerboundthm*

\newcommand{\page}[2]{p^{#1}_{#2}}
\newcommand{\numpages}{P}
\newcommand{\concat}{\circ}
\newcommand{\clausevariable}[2]{i\left(#1, #2\right)}
\newcommand{\clausevariableoccurrences}[2]{prev\left(#1, #2\right)}
\newcommand{\pattern}[1]{pat(#1)}

\newcommand{\booleanformula}{\varphi}
\newcommand{\cachingproblem}{\textsc{Caching}}
\newcommand{\cachingstrategy}{\mathcal{S}}

\newcommand{\wipegadget}[1]{WIPE$\left(#1\right)$}
\newcommand{\readgadget}[2]{READ$\left(#1, #2\right)$}
\newcommand{\publicgadget}[2]{PUBLIC$\left(#1, #2\right)$}
\newcommand{\clausegadget}[2]{CLAUSE$\left(#1, #2\right)$}
\newcommand{\variablegadget}[1]{VARIABLE$\left(#1\right)$}
\newcommand{\gadgetsize}[1]{\text{size(#1)}}

\begin{proof}
  We will prove that both problems are strongly NP-hard via a single reduction. We reduce from the following variant of $3$-SAT, which is also NP-complete. Given a $3$-CNF Boolean formula $\booleanformula(x_1, ..., x_n)$ with $n$ variables and $m$ clauses, where $n$ is even, is there a satisfying variable assignment where half the variables are true and the other half are false?\footnote{For our reduction, it suffices to pad the formula with $n$ dummy variables that never appear, but it is possible to guarantee the variables actually appear in clauses, e.g.\ \cite{saha2021np}.}

  We will take an instance of this problem $\booleanformula$ and produce a generic caching problem $\cachingproblem(\booleanformula)$ that can be viewed as both a caching with reserves or a public-private caching problem. Our goal is to show that $\booleanformula$ has a half-true half-false satisfying assignment if and only if there is a caching strategy with at most $C$ cache misses (in either the caching with reserves or the public-private caching regime), where $C$ is an integer that we choose later that depends only on $m$ and $n$.

  To reduce the number of relationships between problems and solutions that we need to prove, we will leverage an insight from the proof of Proposition \ref{prop:equivalent}. Namely, when we consider the two caching problems on the same input, we can always transform a public-private caching strategy with $C$ cache misses into a caching with reserves strategy with $C$ cache misses (the other direction is where we actually lost a factor two). Hence to prove our theorem here we only need to establish two facts:
  (i) if there is a half-true half-false satisfying assignment for $\booleanformula$, then there is a public-private caching strategy with at most $C$ cache misses for $\cachingproblem(\booleanformula)$ and (ii) if there is a caching with reserves strategy with at most $C$ cache misses for $\cachingproblem(\booleanformula)$, then there is a half-true half-false satisfying assignment for $\booleanformula$.
  
  Our instance has $(n+4m+3)$ agents\footnote{The last $4m+3$ agents do not actually need to be distinct for the proof but help simplify the presentation.}. The first $n$ agents have unit reserve sizes: $k_1 = k_2 = \cdots = k_n \triangleq 1$; the last $4m+3$ agents, zero reserve sizes: $k_{n+1} = k_{n+2} = \cdots = k_{n+4m+3} \triangleq 0$. The publicly accessible cache has $\frac12 n + 2$ space, so the total cache size is $k \triangleq \frac32 n + 2$.
  
  Regarding pages, we will use $\page{i}{j}$ to denote page $j$ belonging to agent $i$. 
  
  The high-level plan is as follows. We will reason about maximizing the number of cache hits, which is equivalent to minimizing the number of cache misses. In particular, an algorithm may earn a cache hit by permitting a page to occupy cache for the duration between two consecutive requests to that page, but of course is limited by the amount of cache space available.
  
  Each of our first $n$ agents represents a variable of our Boolean formula. Such an agent $i$ has $3\text{deg}(i) + 4$ pages ($\text{deg}(i)$ is the number of clauses containing variable $i$), each of which occurs exactly twice and hence provides a single caching opportunity. Our desired mapping is that setting the associated variable to true corresponds to capitalizing on the caching opportunities of pages whose numbers are congruent to one mod three; false, congruent to two mod three. These true and false subsequences are interwoven so individually they can be safely cached in private cache/reserve, but together they occupy public cache at critical points in the sequence. Additionally, there are pages congruent to zero mod three; these conflict with the one mod three and two mod three sequences in ways that allow us to verify key facts about variables: that they are set to satisfy clauses and that half are true and half are false.
  
  Our reduction involves several gadgets, each of which is just a sequence of particular pages. For each gadget, we will briefly explain its role in the construction, then provide a formal description accompanied by a diagram of the gadget.
  
  \textbf{Public-Cache-Occupying Gadget}. This gadget forces an efficient caching strategy to dedicate $x$ slots of public cache to pages from an agent between $n+1$ and $n+4m+3$. Formally, there are $4m+3$ occurrences of the public-cache-occupying gadget, which each take in a positive integer parameter $x$. We will refer to the $i$th gadget given a parameter of $x$ as \publicgadget{i}{x}, where $i \in [4m+3], x \in \mathbb{Z}^+$. It uses $x$ pages from agent $n+i$ and round-robins over them $C'$ times where $C'$ is an integer that depends only on $n$ and $m$ that we choose later.
  \begin{align*}
  \forall i \in [4m + 3] \qquad
    \text{\publicgadget{i}{x}} &\triangleq
      \left[
        \page{n+i}{1}
        \page{n+i}{2}
        \cdots
        \page{n+i}{x}
      \right]^{C'} \\
    \gadgetsize{\publicgadget{i}{x}} &= C'x
  \end{align*}
  See Figure~\ref{fig:public-gadget} for a visualization of this gadget.
  
  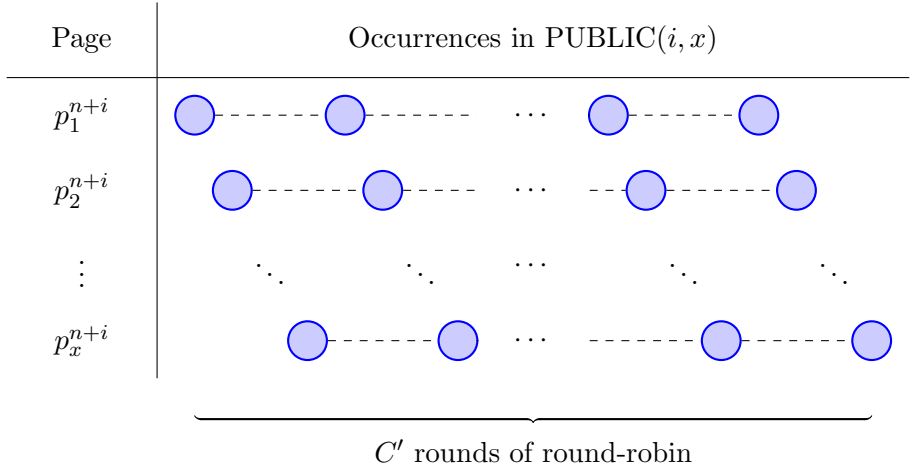
\begin{figure}
\centering
\begin{tikzpicture}[%
  auto,
  page/.style={
    circle,
    draw=blue,
    thick,
    fill=blue!20,
    text width=0.5em,
    align=center,
    minimum height=0.5em
  },
  invis/.style={
    circle,
    draw=none,
    thick,
    fill=none,
    text width=3em,
    align=center,
    minimum height=2em
  }
  ]
  \draw (0, 1.5) -- (0, -3.5);
  \draw (-2, 0.5) -- (10, 0.5);
  
  \node at (-1, 1) {Page};
  \node at (-1, 0) {$\page{n+i}{1}$};
  \node at (-1, -1) {$\page{n+i}{2}$};
  \node at (-1, -2) {$\vdots$};
  \node at (-1, -3) {$\page{n+i}{x}$};
  
  \node (pagesequence) at (5, 1) {Occurrences in \publicgadget{i}{x}};
  
  \node[page] (v11) at (0.5, 0) {};
  \node[page] (v21) at (1, -1) {};
  \node[invis] (v31) at (1.5, -2) {$\ddots$};
  \node[page] (v41) at (2, -3) {};
  
  \node[page] (v12) at (2.5, 0) {};
  \node[page] (v22) at (3, -1) {};
  \node[invis] (v32) at (3.5, -2) {$\ddots$};
  \node[page] (v42) at (4, -3) {};
  
  \node[invis] (v13) at (5, 0) {$\cdots$};
  \node[invis] (v23) at (5, -1) {$\cdots$};
  \node[invis] (v33) at (5, -2) {$\cdots$};
  \node[invis] (v43) at (5, -3) {$\cdots$};
  
  \node[page](v14) at (6, 0) {};
  \node[page] (v24) at (6.5, -1) {};
  \node[invis] (v34) at (7, -2) {$\ddots$};
  \node[page] (v44) at (7.5, -3) {};
  
  \node[page] (v15) at (8, 0) {};
  \node[page] (v25) at (8.5, -1) {};
  \node[invis] (v35) at (9, -2) {$\ddots$};
  \node[page] (v45) at (9.5, -3) {};
  
  \draw[dashed] (v11) -- (v12) -- (v13) -- (v14) -- (v15);
  \draw[dashed] (v21) -- (v22) -- (v23) -- (v24) -- (v25);
  \draw[dashed] (v41) -- (v42) -- (v43) -- (v44) -- (v45);
  
    \draw[decorate, decoration = {calligraphic brace}, thick] (9.5, -4) -- (0.5, -4) node[pos=0.5,below=5pt] {$C'$ rounds of round-robin};
\end{tikzpicture}
\caption{\publicgadget{i}{x} round-robins between $x$ pages with the intention of occupying the $x$ units of the public cache (agent $(n+i)$ has no private cache and is not used elsewhere).}
\label{fig:public-gadget}
\end{figure}
  
  \textbf{Clause Gadget.} The $j$th copy of this gadget checks that clause $j$ is satisfied. To help explain the design of this gadget, consider the clause $(x_5 \lor x_2 \lor \neg x_4)$. We focus on the literal $x_2$. Leading up to this gadget, agent 2 will have one page whose number is congruent to one mod three that has been requested exactly once so far. This page is currently supposed to be in cache if the variable is true, but we actually want a page to be in cache if the variable is false, so that the clause being unsatisfied corresponds to high cache load. Therefore our paging sequence presents the first request for a new page congruent to two mod three for this agent and then the second request for this original page congruent to one mod three. After we arrange the pages of the other two agents to be in a similar state, we insert two requests for a third page congruent to zero mod three, which will take up a unit of public cache if this variable does not satisfy this clause. If none of the variables satisfy the clause, this will use up three units of public cache, which we can detect precisely using a public gadget.
  
  Now that we have given some intuition, we are ready to formally present the clause gadget. There are exactly $m$ occurrences of clause gadgets. We say that a clause $j$'s literal pattern, denoted \pattern{j}, 
  can be one of TTT, TTF, TFF, or FFF depending whether it has zero, one, two, or three negated literals (WLOG we rearrange the literals in each clause so nonnegated literals come first). We will use $\clausevariable{j}{1}$, $\clausevariable{j}{2}$ and $\clausevariable{j}{3}$ to denote the indices of the three variables that appear in the clause. We will also use $\clausevariableoccurrences{j}{\ell}$ to denote how many times $\clausevariable{j}{\ell}$ has appeared in previous clauses:
  \begin{align*}
      \forall \ell \in \{1, 2, 3\} \qquad
      \clausevariableoccurrences{j}{\ell}
        &\triangleq
        \left| \left\{ j' < j \mid x_{\clausevariable{j}{\ell}} \text{ appears in clause } j' \right\} \right|
  \end{align*}

  With these definitions in hand, our clause gadget is defined as follows\footnote{Technically speaking, the way we have defined our clause gadgets means we need each clause to have \emph{exactly} three literals. It is possible to handle smaller clauses as well with this gadget, but this introduces additional indexing complexity.}:
  \begin{align*}
    \text{\clausegadget{j}{TTT}} &\triangleq
      \left[
      \page{\clausevariable{j}{1}}{3\clausevariableoccurrences{j}{1}+2}~
      \page{\clausevariable{j}{2}}{3\clausevariableoccurrences{j}{2}+2}~
      \page{\clausevariable{j}{3}}{3\clausevariableoccurrences{j}{3}+2}
      \concat \text{\publicgadget{4j-2}{\frac12 n + 2}}
      \right.\\
      \\ &\phantom{{}\triangleq{}} \concat
      \left.
      \page{\clausevariable{j}{1}}{3\clausevariableoccurrences{j}{1}+1}~
      \page{\clausevariable{j}{2}}{3\clausevariableoccurrences{j}{2}+1}~
      \page{\clausevariable{j}{3}}{3\clausevariableoccurrences{j}{3}+1}
      \right]
      \\ &\phantom{{}\triangleq{}} \concat
      \left[
      \page{\clausevariable{j}{1}}{3\clausevariableoccurrences{j}{1}+3}~
      \page{\clausevariable{j}{2}}{3\clausevariableoccurrences{j}{2}+3}~
      \page{\clausevariable{j}{3}}{3\clausevariableoccurrences{j}{3}+3}
      \concat \text{\publicgadget{4j-1}{\frac12 n}}
      \right.
      \\ &\phantom{{}\triangleq{}} \concat
      \left.
      \page{\clausevariable{j}{1}}{3\clausevariableoccurrences{j}{1}+3}~
      \page{\clausevariable{j}{2}}{3\clausevariableoccurrences{j}{2}+3}~
      \page{\clausevariable{j}{3}}{3\clausevariableoccurrences{j}{3}+3}
      \right]
      \\ &\phantom{{}\triangleq{}} \concat
      \left[ \text{\publicgadget{4j}{\frac12 n + 2}} \right]
      \\ &\phantom{{}\triangleq{}} \concat
      \left[
      \page{\clausevariable{j}{1}}{3\clausevariableoccurrences{j}{1}+4}~
      \page{\clausevariable{j}{2}}{3\clausevariableoccurrences{j}{2}+4}~
      \page{\clausevariable{j}{3}}{3\clausevariableoccurrences{j}{3}+4}
      \concat \text{\publicgadget{4j+1}{\frac12 n + 2}}
      \right.
      \\ &\phantom{{}\triangleq{}} \concat
      \left.
      \page{\clausevariable{j}{1}}{3\clausevariableoccurrences{j}{1}+2}~
      \page{\clausevariable{j}{2}}{3\clausevariableoccurrences{j}{2}+2}~
      \page{\clausevariable{j}{3}}{3\clausevariableoccurrences{j}{3}+2}
      \right] \\
    \text{\clausegadget{j}{TTF}} &\triangleq
      \left[
      \page{\clausevariable{j}{1}}{3\clausevariableoccurrences{j}{1}+2}~
      \page{\clausevariable{j}{2}}{3\clausevariableoccurrences{j}{2}+2}
      \concat \text{\publicgadget{4j-2}{\frac12 n + 2}} 
      \right. \\
      \\ &\phantom{{}\triangleq{}} \concat
      \left.
      \page{\clausevariable{j}{1}}{3\clausevariableoccurrences{j}{1}+1}~
      \page{\clausevariable{j}{2}}{3\clausevariableoccurrences{j}{2}+1}
      \right]
      \\ &\phantom{{}\triangleq{}} \concat
      \left[
      \page{\clausevariable{j}{1}}{3\clausevariableoccurrences{j}{1}+3}~
      \page{\clausevariable{j}{2}}{3\clausevariableoccurrences{j}{2}+3}~
      \page{\clausevariable{j}{3}}{3\clausevariableoccurrences{j}{3}+3}
      \concat \text{\publicgadget{4j-1}{\frac12 n}}
      \right.
      \\ &\phantom{{}\triangleq{}} \concat
      \left.
      \page{\clausevariable{j}{1}}{3\clausevariableoccurrences{j}{1}+3}~
      \page{\clausevariable{j}{2}}{3\clausevariableoccurrences{j}{2}+3}~
      \page{\clausevariable{j}{3}}{3\clausevariableoccurrences{j}{3}+3}
      \right]
      \\ &\phantom{{}\triangleq{}} \concat
      \left[
      \page{\clausevariable{j}{3}}{3\clausevariableoccurrences{j}{3}+2}
      \concat \text{\publicgadget{4j}{\frac12 n + 2}} \concat
      \page{\clausevariable{j}{3}}{3\clausevariableoccurrences{j}{3}+1}
      \right]
      \\ &\phantom{{}\triangleq{}} \concat
      \left[
      \page{\clausevariable{j}{1}}{3\clausevariableoccurrences{j}{1}+4}~
      \page{\clausevariable{j}{2}}{3\clausevariableoccurrences{j}{2}+4}~
      \page{\clausevariable{j}{3}}{3\clausevariableoccurrences{j}{3}+4}
      \concat \text{\publicgadget{4j+1}{\frac12 n + 2}}
      \right.
      \\ &\phantom{{}\triangleq{}} \concat
      \left.
      \page{\clausevariable{j}{1}}{3\clausevariableoccurrences{j}{1}+2}~
      \page{\clausevariable{j}{2}}{3\clausevariableoccurrences{j}{2}+2}~
      \page{\clausevariable{j}{3}}{3\clausevariableoccurrences{j}{3}+2}
      \right] \\
    \text{\clausegadget{j}{TFF}} &\triangleq
      \left[
      \page{\clausevariable{j}{1}}{3\clausevariableoccurrences{j}{1}+2}
      \concat \text{\publicgadget{4j-2}{\frac12 n + 2}} \concat
      \page{\clausevariable{j}{1}}{3\clausevariableoccurrences{j}{1}+1}
      \right]
      \\ &\phantom{{}\triangleq{}} \concat
      \left[
      \page{\clausevariable{j}{1}}{3\clausevariableoccurrences{j}{1}+3}~
      \page{\clausevariable{j}{2}}{3\clausevariableoccurrences{j}{2}+3}~
      \page{\clausevariable{j}{3}}{3\clausevariableoccurrences{j}{3}+3}
      \concat \text{\publicgadget{4j-1}{\frac12 n}}
      \right.
      \\ &\phantom{{}\triangleq{}} \concat
      \left.
      \page{\clausevariable{j}{1}}{3\clausevariableoccurrences{j}{1}+3}~
      \page{\clausevariable{j}{2}}{3\clausevariableoccurrences{j}{2}+3}~
      \page{\clausevariable{j}{3}}{3\clausevariableoccurrences{j}{3}+3}
      \right]
      \\ &\phantom{{}\triangleq{}} \concat
      \left[
      \page{\clausevariable{j}{2}}{3\clausevariableoccurrences{j}{2}+2}~
      \page{\clausevariable{j}{3}}{3\clausevariableoccurrences{j}{3}+2}
      \concat \text{\publicgadget{4j}{\frac12 n + 2}}
      \right.
      \\ &\phantom{{}\triangleq{}} \concat
      \left[
      \page{\clausevariable{j}{2}}{3\clausevariableoccurrences{j}{2}+1}~
      \page{\clausevariable{j}{3}}{3\clausevariableoccurrences{j}{3}+1}
      \right]
      \\ &\phantom{{}\triangleq{}} \concat
      \left[
      \page{\clausevariable{j}{1}}{3\clausevariableoccurrences{j}{1}+4}~
      \page{\clausevariable{j}{2}}{3\clausevariableoccurrences{j}{2}+4}~
      \page{\clausevariable{j}{3}}{3\clausevariableoccurrences{j}{3}+4}
      \concat \text{\publicgadget{4j+1}{\frac12 n + 2}}
      \right.
      \\ &\phantom{{}\triangleq{}} \concat
      \left.
      \page{\clausevariable{j}{1}}{3\clausevariableoccurrences{j}{1}+2}~
      \page{\clausevariable{j}{2}}{3\clausevariableoccurrences{j}{2}+2}~
      \page{\clausevariable{j}{3}}{3\clausevariableoccurrences{j}{3}+2}
      \right] \\
    \text{\clausegadget{j}{FFF}} &\triangleq
      \left[ \text{\publicgadget{4j-2}{\frac12 n + 2}} \right]
      \\ &\phantom{{}\triangleq{}} \concat
      \left[
      \page{\clausevariable{j}{1}}{3\clausevariableoccurrences{j}{1}+3}~
      \page{\clausevariable{j}{2}}{3\clausevariableoccurrences{j}{2}+3}~
      \page{\clausevariable{j}{3}}{3\clausevariableoccurrences{j}{3}+3}
      \concat \text{\publicgadget{4j-1}{\frac12 n}}
      \right.
      \\ &\phantom{{}\triangleq{}} \concat
      \left.
      \page{\clausevariable{j}{1}}{3\clausevariableoccurrences{j}{1}+3}~
      \page{\clausevariable{j}{2}}{3\clausevariableoccurrences{j}{2}+3}~
      \page{\clausevariable{j}{3}}{3\clausevariableoccurrences{j}{3}+3}
      \right]
      \\ &\phantom{{}\triangleq{}} \concat
      \left[
      \page{\clausevariable{j}{1}}{3\clausevariableoccurrences{j}{1}+2}~
      \page{\clausevariable{j}{2}}{3\clausevariableoccurrences{j}{2}+2}~
      \page{\clausevariable{j}{3}}{3\clausevariableoccurrences{j}{3}+2}
      \concat \text{\publicgadget{4j}{\frac12 n + 2}}
      \right.
      \\ &\phantom{{}\triangleq{}} \concat
      \left[
      \page{\clausevariable{j}{1}}{3\clausevariableoccurrences{j}{1}+1}~
      \page{\clausevariable{j}{2}}{3\clausevariableoccurrences{j}{2}+1}~
      \page{\clausevariable{j}{3}}{3\clausevariableoccurrences{j}{3}+1}
      \right]
      \\ &\phantom{{}\triangleq{}} \concat
      \left[
      \page{\clausevariable{j}{1}}{3\clausevariableoccurrences{j}{1}+4}~
      \page{\clausevariable{j}{2}}{3\clausevariableoccurrences{j}{2}+4}~
      \page{\clausevariable{j}{3}}{3\clausevariableoccurrences{j}{3}+4}
      \concat \text{\publicgadget{4j+1}{\frac12 n + 2}}
      \right.
      \\ &\phantom{{}\triangleq{}} \concat
      \left.
      \page{\clausevariable{j}{1}}{3\clausevariableoccurrences{j}{1}+2}~
      \page{\clausevariable{j}{2}}{3\clausevariableoccurrences{j}{2}+2}~
      \page{\clausevariable{j}{3}}{3\clausevariableoccurrences{j}{3}+2}
      \right]
  \end{align*}
  
  See Figure~\ref{fig:clause-gadget} for a visualization of this gadget. Each such gadget adds the following number of page requests to the sequence:
  \begin{align*}
    \gadgetsize{\clausegadget{j}{\pattern{j}}}
      &= \underbrace{ C'\left(2n + 6\right) }_{\text{\publicgadget{\cdot}{\cdot}}} + 18
       = 2 C'n + 6C' + 18
  \end{align*}
  
  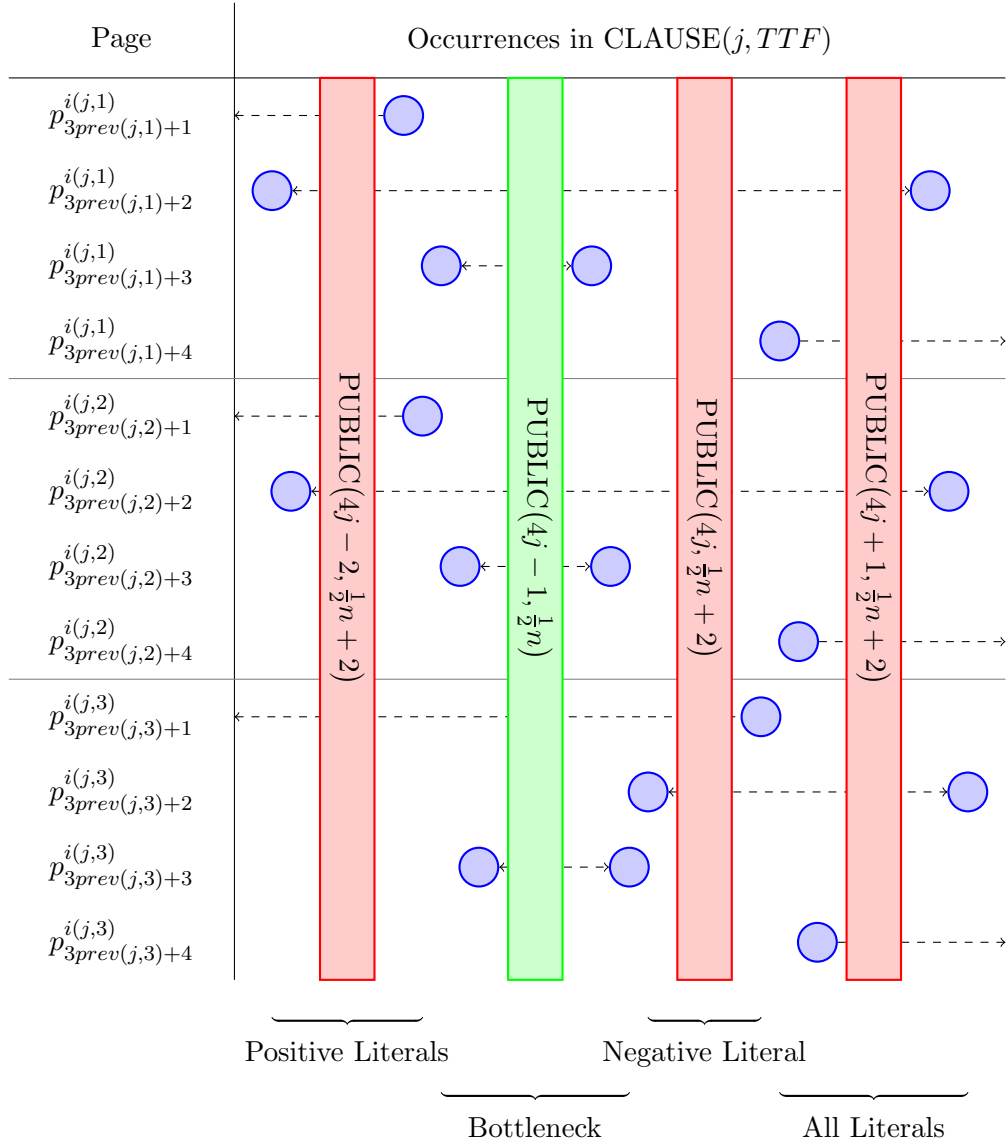
\begin{figure}
\centering
\begin{tikzpicture}[%
  auto,
  page/.style={
    circle,
    draw=blue,
    thick,
    fill=blue!20,
    text width=0.5em,
    align=center,
    minimum height=0.5em
  },
  wipegadget/.style={
    rectangle,
    draw=red,
    thick,
    fill=red!20,
    minimum width=12cm,
    minimum height=0.5cm,
    rotate=-90
  },
  readgadget/.style={
    rectangle,
    draw=green,
    thick,
    fill=green!20,
    minimum width=12cm,
    minimum height=0.5cm,
    rotate=-90
  }
  ]
  \draw (0, 1.5) -- (0, -11.5);
  \draw (-3, 0.5) -- (10.25, 0.5);
  \draw[color=black!50] (-3, -3.5) -- (10.25, -3.5);
  \draw[color=black!50] (-3, -7.5) -- (10.25, -7.5);
  
  \node at (-1.5, 1) {Page};
  \node at (-1.5, 0) {$\page{\clausevariable{j}{1}}{3\clausevariableoccurrences{j}{1}+1}$};
  \node at (-1.5, -1) {$\page{\clausevariable{j}{1}}{3\clausevariableoccurrences{j}{1}+2}$};
  \node at (-1.5, -2) {$\page{\clausevariable{j}{1}}{3\clausevariableoccurrences{j}{1}+3}$};
  \node at (-1.5, -3) {$\page{\clausevariable{j}{1}}{3\clausevariableoccurrences{j}{1}+4}$};
  \node at (-1.5, -4) {$\page{\clausevariable{j}{2}}{3\clausevariableoccurrences{j}{2}+1}$};
  \node at (-1.5, -5) {$\page{\clausevariable{j}{2}}{3\clausevariableoccurrences{j}{2}+2}$};
  \node at (-1.5, -6) {$\page{\clausevariable{j}{2}}{3\clausevariableoccurrences{j}{2}+3}$};
  \node at (-1.5, -7) {$\page{\clausevariable{j}{2}}{3\clausevariableoccurrences{j}{2}+4}$};
  \node at (-1.5, -8) {$\page{\clausevariable{j}{3}}{3\clausevariableoccurrences{j}{3}+1}$};
  \node at (-1.5, -9) {$\page{\clausevariable{j}{3}}{3\clausevariableoccurrences{j}{3}+2}$};
  \node at (-1.5, -10) {$\page{\clausevariable{j}{3}}{3\clausevariableoccurrences{j}{3}+3}$};
  \node at (-1.5, -11) {$\page{\clausevariable{j}{3}}{3\clausevariableoccurrences{j}{3}+4}$};
  
  \node at (5.125, 1) {Occurrences in \clausegadget{j}{TTF}};
  
  \node[page] (v121) at (0.5, -1) {};
  \node[page] (v221) at (0.75, -5) {};
  \node[page] (v112) at (2.25, 0) {};
  \node[page] (v212) at (2.5, -4) {};

  \node[page] (v131) at (2.75, -2) {};
  \node[page] (v231) at (3, -6) {};
  \node[page] (v331) at (3.25, -10) {};
  \node[page] (v132) at (4.75, -2) {};
  \node[page] (v232) at (5, -6) {};
  \node[page] (v332) at (5.25, -10) {};
  
  \node[page] (v321) at (5.5, -9) {};
  \node[page] (v312) at (7, -8) {};
  
  \node[page] (v141) at (7.25, -3) {};
  \node[page] (v241) at (7.5, -7) {};
  \node[page] (v341) at (7.75, -11) {};
  \node[page] (v122) at (9.25, -1) {};
  \node[page] (v222) at (9.5, -5) {};
  \node[page] (v322) at (9.75, -9) {};

  \draw[dashed, <-] (0, 0) -- (v112);
  \draw[dashed, <->] (v121) -- (v122);
  \draw[dashed, <->] (v131) -- (v132);
  \draw[dashed, ->] (v141) -- (10.25, -3);
  \draw[dashed, <-] (0, -4) -- (v212);
  \draw[dashed, <->] (v221) -- (v222);
  \draw[dashed, <->] (v231) -- (v232);
  \draw[dashed, ->] (v241) -- (10.25, -7);
  \draw[dashed, <-] (0, -8) -- (v312);
  \draw[dashed, <->] (v321) -- (v322);
  \draw[dashed, <->] (v331) -- (v332);
  \draw[dashed, ->] (v341) -- (10.25, -11);

  \node[wipegadget] at (1.5, -5.5) {\publicgadget{4j-2}{\frac12 n + 2}};
  \node[readgadget] at (4, -5.5) {\publicgadget{4j-1}{\frac12 n}};
  \node[wipegadget] at (6.25, -5.5) {\publicgadget{4j}{\frac12 n + 2}};
  \node[wipegadget] at (8.5, -5.5) {\publicgadget{4j+1}{\frac12 n + 2}};
  
  \draw[decorate, decoration = {calligraphic brace}, thick] (2.5, -12) -- (0.5, -12) node[pos=0.5,below=5pt] {Positive Literals};
  \draw[decorate, decoration = {calligraphic brace}, thick] (5.25, -13) -- (2.75, -13) node[pos=0.5,below=5pt] {Bottleneck};
  \draw[decorate, decoration = {calligraphic brace}, thick] (7, -12) -- (5.5, -12) node[pos=0.5,below=5pt] {Negative Literal};
  \draw[decorate, decoration = {calligraphic brace}, thick] (9.75, -13) -- (7.25, -13) node[pos=0.5,below=5pt] {All Literals};
\end{tikzpicture}
\caption{\clausegadget{j}{TTF} has a bottleneck region where its three variables might temporarily occupy a unit of public cache. If none of the variables has an appropriate value to satisfy the clause, then \publicgadget{j}{\frac12 n} will incur large number of cache misses.}
\label{fig:clause-gadget}
\end{figure}

  \textbf{Variable Gadgets.} This pair of gadgets, \variablegadget{T} and \variablegadget{F}, help enforce that there are at most $\frac12 n$ variables set to true or false, respectively. \variablegadget{T} is the very first gadget in our overall page request sequence. It requests the first pages for every variable agent. Immediately after this, there is a moment where only the one mod three subsequences have used up private caches. We insert some additional zero mod three pages to make these agents overflow into public cache and also only allow $\frac12 n$ to do so. This bounds the number of one mod three subsequences that can be chosen; i.e.\ only half the variables may be true.
  
  Similarly, \variablegadget{F} is the very last gadget in our overall page request sequence. It requests the last two mod three page for every agent. During this request, we insert some additional zero mod three pages to only allow $\frac12 n$ of these to be chosen; i.e.\ only half the variables may be false.
  
  Formally, our variable gadgets are defined as follows: 
  \begin{align*}
    \text{\variablegadget{T}} &=
      \page{1}{1}~ \page{2}{1} \cdots \page{n}{1}
      \\ &\phantom{{}={}} \concat
      \page{1}{0}~ \page{2}{0} \cdots \page{n}{0}
      \\ &\phantom{{}={}} \concat
      \text{\publicgadget{1}{2}}
      \\ &\phantom{{}={}} \concat
      \page{1}{0}~ \page{2}{0} \cdots \page{n}{0} \\
    \text{\variablegadget{F}} &=
      \page{1}{3\deg(1)+2}~ \page{2}{3\deg(2)+2} \cdots \page{n}{3\deg(n)+2}
      \\ &\phantom{{}={}} \concat
      \text{\publicgadget{4m+2}{\frac12 n + 2}}
      \\ &\phantom{{}={}} \concat
      \page{1}{3\deg(1)+1}~ \page{2}{3\deg(2)+1} \cdots \page{n}{3\deg(n)+1}
      \\ &\phantom{{}={}} \concat
      \page{1}{3\deg(1)+3}~ \page{2}{3\deg(2)+3} \cdots \page{n}{3\deg(n)+3}
      \\ &\phantom{{}={}} \concat
      \text{\publicgadget{4m+3}{2}}
      \\ &\phantom{{}={}} \concat
      \page{1}{3\deg(1)+3}~ \page{2}{3\deg(2)+3} \cdots \page{n}{3\deg(n)+3}
      \\ &\phantom{{}={}} \concat
      \page{1}{3\deg(1)+2}~ \page{2}{3\deg(2)+2} \cdots \page{n}{3\deg(n)+2} \\
    \gadgetsize{\variablegadget{T}} &= 3n + 2C' \\
    \gadgetsize{\variablegadget{F}} &= 5n + 2C' + C'\left( \frac12 n + 2\right)
  \end{align*}
  See Figure~\ref{fig:variable-gadget} for a visualization of this gadget.
  
  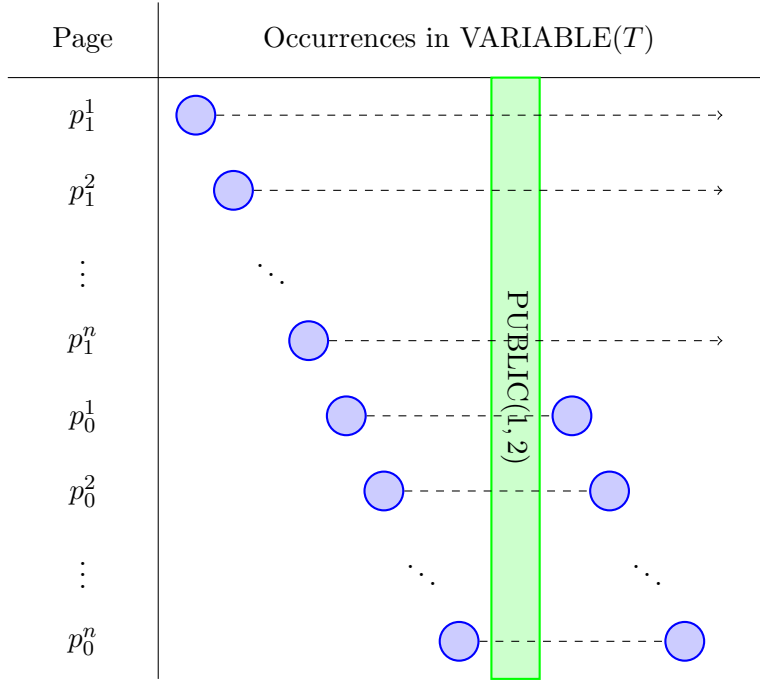
\begin{figure}
\centering
\begin{tikzpicture}[%
  auto,
  page/.style={
    circle,
    draw=blue,
    thick,
    fill=blue!20,
    text width=0.5em,
    align=center,
    minimum height=0.5em
  },
  invis/.style={
    circle,
    draw=none,
    thick,
    fill=none,
    text width=5em,
    align=center,
    minimum height=2em
  },
  publicgadget/.style={
    rectangle,
    draw=green,
    thick,
    fill=green!20,
    minimum width=8cm,
    minimum height=0.5cm,
    rotate=-90
  }
  ]
  \draw (0, 1.5) -- (0, -7.5);
  \draw (-2, 0.5) -- (8, 0.5);
  
  \node at (-1, 1) {Page};
  \node at (-1, 0) {$\page{1}{1}$};
  \node at (-1, -1) {$\page{2}{1}$};
  \node at (-1, -2) {$\vdots$};
  \node at (-1, -3) {$\page{n}{1}$};
  \node at (-1, -4) {$\page{1}{0}$};
  \node at (-1, -5) {$\page{2}{0}$};
  \node at (-1, -6) {$\vdots$};
  \node at (-1, -7) {$\page{n}{0}$};
  
  \node at (4, 1) {Occurrences in \variablegadget{T}};
  \node[page] (v112) at (0.5, 0) {};
  \node[page] (v122) at (1, -1) {};
  \node[invis] (v132) at (1.5, -2) {$\ddots$};
  \node[page] (v142) at (2, -3) {};
  \node[page] (v312) at (2.5, -4) {};
  \node[page] (v322) at (3, -5) {};
  \node[invis] (v332) at (3.5, -6) {$\ddots$};
  \node[page] (v342) at (4, -7) {};
  
  \node[publicgadget] (r1) at (4.75, -3.5) {\publicgadget{1}{2}};
  
  \node[page] (v313) at (5.5, -4) {};
  \node[page] (v323) at (6, -5) {};
  \node[invis] (v333) at (6.5, -6) {$\ddots$};
  \node[page] (v343) at (7, -7) {};
  
  \draw[dashed, ->] (v112) -- (7.5, 0);
  \draw[dashed, ->] (v122) -- (7.5, -1);
  \draw[dashed, ->] (v142) -- (7.5, -3);
  \draw[dashed] (v312) -- (v313);
  \draw[dashed] (v322) -- (v323);
  \draw[dashed] (v342) -- (v343);

\end{tikzpicture}
\caption{\variablegadget{T} serves to ``initialize'' the variables and enforce at most half of them are set to true.}
\label{fig:variable-gadget}
\end{figure}

  \textbf{Putting it All Together.} Our overall page request sequence simply consists of the concatenation of these gadgets, as follows:
  \begin{align*}
    \cachingproblem(\booleanformula)
      &\triangleq \text{\variablegadget{T}}
      \concat \text{\clausegadget{1}{\pattern{1}}} \concat \cdots
      \concat \text{\clausegadget{m}{\pattern{m}}}
      \\ &\vphantom{{}\triangleq{}}
      \concat \text{\variablegadget{F}} \\
    \gadgetsize{$\cachingproblem(\booleanformula)$} &=
      \underbrace{\left[ 3n + 2C' \right]}_{\text{\variablegadget{T}}}
                + \underbrace{\left[ \sum_j 2 C'n + 6C' + 18 \right]}_{\text{\clausegadget{j}{\pattern{j}}} \forall j}
                + \underbrace{\left[ 5n + 2C' + C'\left(\frac12 n + 2\right) \right]}_{\text{\variablegadget{F}}} \\
      &= 2C'mn + 6C'm + \frac12 C'n + 6C' + 18m + 8n
  \end{align*}
  
  \textbf{Correctness.} We have finished presenting the construction and will now reason about its correctness, i.e.\ we want to show that if the original formula $\booleanformula$ is satisfiable by a half-true half-false assignment, then $\cachingproblem(\booleanformula)$ as a public-private caching problem has a strategy that has at most $C$ cache misses, and if $\cachingproblem(\booleanformula)$ as a caching with reserves problem has a strategy that has at most $C$ cache misses then the original formula $\booleanformula$ is satisfiable by a half-true half-false assignment (and we have not chosen $C$ yet). Here is a quick review of the caching opportunities available in our page request sequence:
  \begin{itemize}
    \item For each variable $x_i$, there is a corresponding agent $i$ that has $3\deg(i)+4$ unique numbered pages, all of which allow for a single cache hit.
    \item For each \publicgadget{i}{x} ($4m+3$ in total), there is a corresponding agent $n+i$ with $x$ pages, which each allow for $C'-1$ cache hits.
  \end{itemize}
  
  Now we describe how to convert a (satisfying, half-true, half-false) assignment $\vec{x}$ into a public-private caching strategy $\cachingstrategy(\vec{x})$. The caching strategy will make the following decisions, which are enough to determine the entire strategy:
  \begin{itemize}
    \item For each variable $x_i$, if $x_i$ is set to true then we choose cache hits for the pages congruent to zero or one mod three of the corresponding agent $i$. Pages congruent to one mod three are always in private cache, and pages congruent to zero mod three are in private cache if they do not overlap with pages congruent to one mod three. If $x_i$ is set to false, then we do the same thing with two mod three in place of one mod three.
    \item For each \publicgadget{i}{x}, we choose all cache hits for its corresponding agent $n+i$.
  \end{itemize}
  
  We now show that this caching strategy $\cachingstrategy(\vec{x})$ is valid, i.e. it never exceeds any private cache or public cache. The former fact is easy to see; the set of pages congruent to one do not overlap by construction (in particular, in the design of our clause gadget), and neither do the set of pages congruent to two mod three. Since we only try to fit one of those into private cache and then flexibly fit as many as possible zero mod three pages into private cache, we cannot use more than one unit of private cache per variable agent. The public cache accounting is more complex, and we will reason bottom-up over the gadgets we have presented.
  
  The bottom-most gadget is \publicgadget{i}{x}. We can safely achieve all these cache hits as long as there are $x$ slots of public cache for the duration of the gadget, which we will verify when reasoning about the higher-level gadgets.
  
  Now, consider some \clausegadget{j}{\pattern{j}}. Recall that each variable is set to either true or false, and in the former case we choose its pages congruent to zero/one mod three and in the latter case we choose its pages congruent to zero/two mod three. During \publicgadget{4j-2}{\frac12n + 2}, \publicgadget{4j}{\frac12n + 2}, and \publicgadget{4j+1}{\frac12n + 2}, our caching strategy does not have any variable agent pages in public cache and therefore all of the $\frac12 + n$ public cache slots are available to handle these gadgets. During \publicgadget{4j-1}{\frac12 n}, we know that since we had a satisfying assignment, one of the agents is able to fit its zero mod three page into private cache and hence at most two units of public cache are occupied, leaving $\frac12 n$ public cache for this gadget. For thoroughness, we remember to consider that there needs to be a slot of cache to temporarily hold any page in this gadget, whether we plan to capitalize on its caching opportunity or not. However, this is easily possible because we have accounted for the pages during public-cache occupying subgadgets and outside of that we definitely have (at least two) slots of public cache space.
  
  We finish by considering \variablegadget{T} and \variablegadget{F}. For \variablegadget{T}, we observe that since we are picking the one mod three (true) subsequence for at most $\frac12 n$ variables, we get to put $\frac12 n$ pages of the form $\page{i}{0}$ into private caches and only have $\frac12 n$ such pages occupy public cache, leaving two slots for \publicgadget{1}{2}.
  
  The reasoning is the similar for \variablegadget{F}; we pick the two mod three (false) subsequence for at most $\frac12 n$ variables and hence we can put $\frac12 n$ pages of the form $\page{i}{3\deg(i)+3}$ into private caches. We hence have only $\frac12 n$ such pages occupy public cache, leaving two slots for \publicgadget{4m+3}{2}. The additional \publicgadget{4m+2}{\frac12 n + 2} is safe for the same reasons as the matching subgadgets in the clause gadgets; we picked only one mod three or two mod three pages for each variable and hence do not use any public cache during this subgadget. We continue to be thorough and double-check that there is a slot of cache to temporarily hold each of the pages in this gadget. Again, the pages in public-cache-occupying subgadgets have already been accounted for and outside of that we definitely have (at least two) slots of public cache space.
  
  Now that we have a feasible caching strategy, let us count the number of cache hits it achieves: 
  \begin{align*}
    \text{cache-hits}\left(\cachingstrategy(\vec{x})\right)
      &= \underbrace{\left[ \sum_{i=1}^n 2\deg(i) + 3 \right]}_{\text{Agents } i = 1, 2, ..., n}
       + \underbrace{\left[ (C'-1)\left(2mn + 6m + \frac12 n + 6\right) \right]}_{\text{\publicgadget{\cdot}{\cdot}}} \\
      &= \left[ 2m + 3n \right] + \left[ (C'-1)\left(2 mn + 6m + \frac12 n + 6\right) \right] \\
      &= 2C'mn + 6C'm + \frac12 C'n + 6C' - 2mn - 4m + \frac52 n - 6
  \end{align*}
  
  We are now ready to choose $C$ to be \gadgetsize{$\cachingproblem(\booleanformula)$} minus this quantity. 
  \begin{align*}
    C &\triangleq \underbrace{\left[2C'mn + 6C'm + \frac12 C'n + 6C' + 18m + 8n \right]}_{\gadgetsize{$\cachingproblem(\booleanformula)$}} \\
      &\phantom{{}\triangleq{}}- \left[ 2C'mn + 6C'm + \frac12 C'n + 6C' - 2mn - 4m + \frac52 n - 6 \right] \\
      &= 2mn + 22m + \frac{11}{2} n + 6
  \end{align*}

  We now want to show that if there is a caching-with-reserves strategy $\cachingstrategy$ with this many cache-misses (and hence $\gadgetsize{$\cachingproblem(\booleanformula)$} - C$ cache-hits, we can recover a satisfying, half-true, half-false assignment to $\booleanformula$. We need to reason about how these cache hits are being achieved. We already know that the maximum number of cache hits between the public-cache-occupying subgadgets is
  \[
    \left[ (C'-1)\left(2mn + 6m + \frac12 n + 6\right) \right]
  \]
  because that represents taking every caching opportunity in those subgadgets. Our concern is that perhaps one could obtain extra cache hits on agents $i = 1, 2, ..., n$ by sacrificing some cache hits on these public-cache-occupying subgadgets. However, these subgadgets have been engineered to prevent exactly this; the repeated round-robin means that a caching strategy that is even one slot of public cache space short will incur multiple extra misses. Recall that with $x$ spare public cache space, a caching strategy can handle \publicgadget{i}{x} with only $x$ cache misses (on the first appearance of each page). What happens if we only have $x-1$ spare public cache space instead? The caching strategy must still fault on the initial appearance of each page. In addition, between each consecutive set of requests to all $k$ pages, the algorithm can only keep $k-1$ of them in memory and hence gets a cache miss on at least one page in the latter set of requests. Since there are $C'$ sets of requests, this means we incur at least $C'-1$ additional cache misses\footnote{This is an underestimate, e.g. for $x = 2$ being one slot of public cache short means the caching strategy gets a cache miss on every single page request!}. In other words, the caching strategy does not free up a slot of public cache space during one of these subgadgets unless it incurs at least $C'-1$ additional cache misses. We want to make this not worth it, so we are now ready to choose:
  \begin{align*}
    C' &\triangleq 3m + 4n + 2 \\
    C' - 1 &> 3m + 4n \\
           &= \sum_{i=1}^n 3\deg(i) + 4
  \end{align*}
  In other words, getting an additional slot of public cache space during any subgadget costs more cache misses than all nonsubgadget caching opportunities combined. Hence $\cachingstrategy$ cannot do so and must allocate adequate public cache space to all public-cache-occupying subgadgets.
  
  We are now ready to reason about the number of cache hits among the ``variable'' agents ($i = 1, 2, ..., n$). We know this caching strategy $\cachingstrategy$ achieves at least $3m + 4n$ cache hits among these agents. How are these cache hits distributed between the agents? We claim that agent $i$ does not permit more than $2\deg(i) + 3$ cache hits. For the sake of contradiction, suppose $\cachingstrategy$ achieved more than $2\deg(i) + 3$ cache hits for some agent $i \in [n]$. Subtracting the zero mod three pages, this means there are more than $\deg(i) + 1$ cache hits among the $2\deg(i) + 2$ one and two mod three pages. But that means that if we wrote down all these pages in sorted order, we would have to pick at least two adjacent pages. We claim this would conflict with some \publicgadget{\cdot}{\frac12 n + 2}. To see this, we do some casework.
  \begin{itemize}
    \item Case 1: the pages are numbered $3\ell+1$ and $3\ell+2$ for some $\ell \in \{0, 1, ..., \deg(i)-1\}$, so they both exist during some \clausegadget{j}{\pattern{j}} where $j$ is the $\ell$th clause to contain variable $x_i$. They overlap for the duration of \publicgadget{4j-2}{\frac12 n + 2}.
    \item Case 2: the pages are numbered $3\ell+2$ and $3\ell+4$ for some $\ell \in \{0, 1, ..., \deg(i)-1\}$, so they both exist during some \clausegadget{j}{\pattern{j}} where $j$ is the $\ell$th clause to contain variable $x_i$. They overlap for the duration of \publicgadget{4j+1}{\frac12 n + 2}.
    \item Case 3: the pages are numbered $\deg(i)+1$ and $\deg(i)+2$. They both exist during \variablegadget{F} and overlap for the duration of \publicgadget{4m+2}{\frac12 n + 2}.
  \end{itemize}
  In all cases, this used up a slot of public cache during some \publicgadget{\cdot}{\frac12 n + 2}, which we have already argued is too expensive for our caching strategy. This completes the contradiction and hence $\cachingstrategy$ can achieve at most $\deg(i) + 3$ cache hits for all agents $i \in [n]$.
  
  Next, we want to argue that for each agent, it selects either all the pages congruent to zero and one mod three or all the pages congruent to zero and two mod three. We have already shown that if we consider the sorted list of only pages congruent to one and two mod three, it cannot select adjacent pages. This is already enough to deduce that it must select all pages congruent to zero mod three, some prefix of the pages congruent to one mod three (possibly empty), skip two pages, then the remaining suffix of the pages congruent to two mod three (possibly empty). It remains to show that the one and two mod three pages cannot be mixed. This is why we reduced from half-true, half-false SAT. Whenever both prefix and suffix are not empty for an agent $i$, then the caching strategy has chosen both $\page{i}{1}$ and $\page{i}{3\deg(i)+2}$. But since it has also chosen both $\page{i}{0}$ and $\page{i}{3\deg(i)+3}$, this agent $i$ will use a slot of public cache during both \publicgadget{1}{2} and \publicgadget{4m+3}{2}. But we have $n$ such agents and can only afford $n$ slots of public cache total between both of these subgadgets, so no agent can use a slot during both. Hence for every agent one of prefix or suffix must be empty, i.e. only one mod three or only two mod three pages are chosen. For agent $i$, if one mod three pages are chosen, we set $x_i$ to be true; two mod three, false. Since we have room for \publicgadget{1}{2}, we know that at most half of the variables can be true. Since we have room for \publicgadget{4m+3}{2}, we know that at most half of the variables can be false. This means exactly half are true and half are false. Since for each \clausegadget{j}{\pattern{j}}, we had enough public cache for its subgadget \publicgadget{4j-1}{\frac12 n}, we know that one of the literals in that clause is made true by this assignment choice.
  
  We have shown that $\booleanformula$ has a half-true half-false satisfying assignment if and only if there is a caching with reserves strategy for $\cachingproblem(\booleanformula)$ with at most $C$ cache misses if and only if there is a public-private caching strategy for $\cachingproblem(\booleanformula)$ with at most $C$ cache misses, as desired. This completes the proof.
\end{proof}
\end{document}